\newcommand{\success}{\text{succ}}
\newcommand{\collision}{\text{coll}}
\newcommand{\idle}{\text{idle}}
\newcommand{\tx}{\text{tx}}
\newtheorem{theorem}{Theorem}
\newtheorem{corollary}{Corollary}
\newtheorem{definition}{Definition}
\newtheorem{lemma}{Lemma}
\DeclareMathOperator*{\argmax}{arg\,max}
\title{ORLA/OLAA: Orthogonal Coexistence of LAA and WiFi in Unlicensed Spectrum}
\author{Andres Garcia-Saavedra, Paul Patras, Victor Valls, Xavier Costa-Perez, and Douglas J. Leith\\
\thanks{A. Garcia-Saavedra and X. Costa-Perez are with NEC Laboratories Europe; P. Patras is with University of Edinburgh; V. Valls and D. J. Leith are with Trinity College Dublin.}
}
\begin{document}
\maketitle

\begin{abstract}

Future mobile networks will exploit unlicensed spectrum to boost capacity and meet growing user demands cost-effectively.
The 3GPP has recently defined a Licensed-Assisted Access (LAA) scheme to enable global Unlicensed LTE \mbox{(U-LTE)} deployment, aiming at ($i$) ensuring fair coexistence with incumbent WiFi networks, i.e., impacting on their performance no more than another WiFi device, and ($ii$) achieving superior airtime efficiency as compared to WiFi. In this paper we show the standardized LAA fails to simultaneously fulfill these objectives, and design an alternative orthogonal (collision-free) listen-before-talk coexistence paradigm that provides a substantial improvement in performance, yet imposes no penalty on existing WiFi networks. We derive two LAA optimal transmission policies, ORLA and OLAA, that maximize LAA throughput in both asynchronous and synchronous (i.e., with alignment to licensed anchor frame boundaries) modes of operation, respectively. 
We present a comprehensive performance evaluation through which we demonstrate that, when aggregating packets, IEEE 802.11ac WiFi can be more efficient than 3GPP LAA, whereas our proposals can attain 100\% higher throughput, without harming WiFi. We further show that long U-LTE frames incur up to 92\% throughput losses on WiFi when using 3GPP LAA, whilst ORLA/OLAA sustain $>$200\% gains at no cost, even in the presence of non-saturated WiFi and/or in multi-rate scenarios.
\end{abstract}

\begin{IEEEkeywords}
coexistence, spectrum sharing, unlicensed LTE, LTE-U, LAA, WiFi, listen-before-talk.
\end{IEEEkeywords}

\IEEEpeerreviewmaketitle

\section{Introduction}\label{sec:intro}

\IEEEPARstart{5G} radio access network (RAN) architects actively seek to augment mobile systems with inexpensive spectrum in order to boost network capacity and meet growing user demand in a cost-effective manner.
License-exempt \SI{5}{GHz} U-NII channels, currently exploited almost exclusively by WiFi deployments, are of particular interest to the 3GPP 
community~\cite{tr36.889, 3gpp-rel13}, which is pursuing LTE standardization in the unlicensed arena (\mbox{U-LTE}). 
However, the substantial differences between incumbent WiFi, which employs a listen-before-talk (LBT) contention-based multiplexing protocol (CSMA/CA), and LTE, which is inherently a scheduled paradigm, makes the design of U-LTE channel access protocols particularly challenging~\cite{valls-commlet, canocoexistenceton}. The fundamental question facing U-LTE design is \emph{how to exploit this uncharted spectrum efficiently whilst playing fair with native technologies?}

Initial U-LTE solutions employ a Carrier Sensing and Adaptive Transmission (CSAT) scheme based on channel selection and time-based duty cycling~\cite{csat}. The simplicity of this approach has enabled a short time to market in some countries (USA, Korea, India), but it is incompatible with LBT regulations in regions such as Europe and Japan~\cite{etsi-rule}. More recent 3GPP specifications put forward an LBT-based solution named Licensed-Assisted Access (LAA), to address this deficiency~\cite{3gpp-rel13}. The LBT flavor of LAA is similar to WiFi CSMA/CA, thereby enabling global U-LTE deployment. The impact of different U-LTE approaches on the performance of WiFi technology native to unlicensed bands is continuously studied by the research community~\cite{hajmohammad2013unlicensed, liu2014small, guan4cu, gomez2016srslte, capretti2016}.\footnote{We refer the interested reader to \S\ref{sec:related} for a review of related research.}

\begin{figure}[t!]
      \centering
      \includegraphics[trim = 0mm 0mm 0mm 0mm, clip, width=0.9\linewidth]{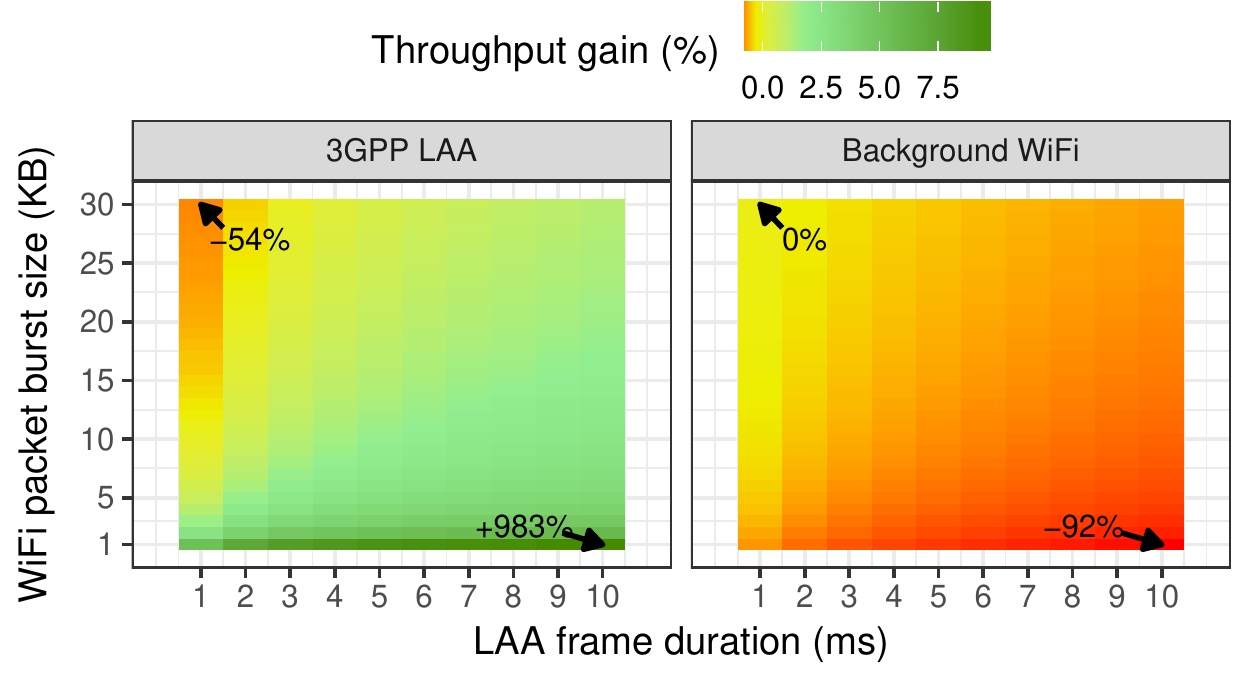}
       \vspace{-3mm}	
      \caption{Performance gain/loss experienced by one coexisting asynchronous 3GPP LAA transmitter and 5 WiFi transmitters, over a WiFi network with 6 transmitters. Both technologies transmit as much data as possible with the same modulation (64-QAM) and contention window configuration. Simulation results.}
       \vspace{-2mm}
      \label{fig:motivation}
\end{figure}

In this paper we argue that the 3GPP LAA scheme is unable to simultaneously meet the following key design criteria: 
($i$) avoid harming the performance of preexisting WiFi wireless nodes (WNs), and ($ii$) provide superior medium access control (MAC) protocol efficiency as compared to WiFi. We illustrate this with an example in Fig.~\ref{fig:motivation}, where 1 LAA WN coexists with 5 backlogged IEEE 802.11ac~\cite{IEEE80211ac} WNs. We plot the MAC throughput gain/loss attained by LAA and WiFi relative to the performance that would have been achieved if an additional WN was added to the network (i.e. 6-WiFi WNs) instead of the LAA WN.  Results are shown for a range of LAA frame durations (x-axis) and WiFi packet burst sizes (y-axis). 
Observe that LAA achieves its highest throughput gain when the background WiFi WNs are most penalized (bottom right corners). {This is due to the fact that LAA transmits data for \SI{10}{\milli \second} upon winning a successful transmission opportunity whereas a WiFi WN only transmits for \SI{62}{\micro \second} ($1$~KB at 130~Mb/s).  This leads to WiFi having a high relative overhead (inter-frame spaces and idle backoff periods)}. Further, observe that in the region where LAA is relatively harmless to WiFi (top left corners), it experiences barely any throughput performance benefit (if not a loss). 

In essence, since the LAA access procedure fundamentally resembles that of WiFi's CSMA/CA it does have the ability to provide fairness; however, increases in MAC efficiency are achieved at the cost of greatly penalising WiFi devices.
Clearly, despite these preliminary results (a thorough evaluation campaign is presented in \S\ref{sec:results}), U-LTE can provide enhanced PHY-layer efficiency (e.g., more robust error recovery mechanisms) and operational advantages (e.g., common radio resource management with licensed LTE), which may sustain the appeal of LAA irrespective of its (inefficient) MAC-layer coexistence mechanism. In this paper we argue, however,  that U-LTE's MAC-layer efficiency can (and should) be substantially improved \emph{without compromising incumbent WiFi}.
We therefore revisit U-LTE's MAC coexistence design, making use of the observation that if the airtime used by WiFi and U-LTE were decoupled by  \emph{eliminating inter-technology collisions}, we could build U-LTE access schemes employing transmission strategies that are demonstrably harmless to WiFi and achieve higher data rates compared to CSMA/CA-based alternatives for cellular access to unlicensed spectrum. These include LTE-WLAN aggregation (LWA or LTE-H) \cite{lwa} and 3GPP LAA. {In summary, we make the following contributions}:
\begin{itemize}
\item We design an orthogonal (collision-free) LBT-based access protocol that coexists with WiFi truly seamlessly;
\item We construct optimal LBT transmission policies that suit a variety of network conditions, including heterogeneous traffic loads and link rates;
\item We devise optimal policies for synchronous LBT technologies, constrained to transmit during fixed intervals;
\item We undertake a thorough simulation campaign to evaluate the proposed policies and demonstrate gains in LBT throughput of more than 200\% with no negative impact on WiFi;
\item We give practical guidelines to aid the implementation of the mechanisms we propose on off-the-shelf hardware.
\end{itemize}

The rest of the paper is organized as follows. \S\ref{sec:preliminaries} reviews fundamental technical concepts underpinning the design of the proposed U-LTE coexistence scheme and in \S\ref{sec:mechanism} we present their operation principles. In \S\ref{sec:orla} and \S\ref{sec:olaa} we introduce optimal LBT transmission policies that simultaneously accomplish fair resource sharing with WiFi WNs and throughput maximization. We evaluate via simulations our access procedures and transmission policies in \S\ref{sec:results}, comparing against 3GPP's LAA scheme. We discuss practical details that can facilitate the implementation of our solutions in \S\ref{sec:practical}. Finally, \S\ref{sec:related} reviews related work and \S\ref{sec:conclusions} draws concluding remarks.

\section{Preliminaries}\label{sec:preliminaries}
We begin by reviewing two technical aspects fundamental to the design of the coexistence mechanism we propose, namely ($i$)~LBT regulatory constraints in \SI{5}{GHz} bands, and ($ii$)~the Distributed Coordinated Function (DCF) that governs communication in IEEE 802.11 WiFi. 
We focus on the \text{ETSI 301 893} regulation \cite{etsi-rule}, as it is the most restrictive and a solution compliant with this will be widely deployable. 
To stress that the coexistence mechanism we propose extends to any technology that seeks operation on \SI{5}{GHz} channels, hereafter we refer to an LTE-LAA cell as an LBT WN.

\subsection{Listen Before Talk (LBT)} 
\label{sec:lbt-fbe}
The \text{ETSI 301 893} standard \cite{etsi-rule} specifies that load based equipment (LBE) shall implement LBT following the Clear Channel Assessment (CCA) mode using energy detection.\footnote{{Load base equipment (LBE) does not follow a fixed TX/RX pattern, but is driven by demand~\cite{etsi-rule}. In contrast, frame based equipment (FBE) transmits at fixed intervals. 
Since our goal is to opportunistically exploit the medium for LTE transmissions, we work with the LBE paradigm.}}
Energy detection refers to observing the operating channel for a pre-defined duration and determining whether the energy level sensed exceeds a sensitivity threshold. In such circumstances, the channel is regarded busy and transmission deferred. ETSI mandates that CCA assertion can be performed in accordance with the IEEE 802.11 standard's provisions~\cite{IEEE80211ac}. Alternatively, minimum requirements should be met, as defined by two channel access options stipulated in \cite{etsi-rule}.

In our design, we consider LBT WNs that comply with clause~18 of IEEE 802.11~\cite{IEEE80211ac}, which requires OFDM transmitters to identify a busy channel within \SI{4}{\micro \second}. This also requires a WiFi station operating in the 5~GHz band to observe the channel idle for at least DIFS = \SI{34}{\micro \second}, before attempting to transmit. We further note that 802.11 data frames and acknowledgements (ACKs) are separated by SIFS = \SI{16}{\micro \second}. 
We propose allowing the LBT WN to attempt transmission following an 802.11 frame exchange, immediately after the channel is sensed idle for an LTE interframe space LIFS = \SI{20}{\micro \second} that we introduce.  
This ensures an 802.11 frame exchange will not be interrupted, while avoiding potential collisions with co-existing 802.11 stations (which may transmit immediately after DIFS, if initializing random back-off counters with zero).
We summarize the 802.11 channel access procedure next.

\subsection{IEEE 802.11 MAC protocol (WiFi)}

WiFi medium access is regulated by the IEEE 802.11 DCF, which performs CSMA/CA with Binary Exponential Backoff (BEB). While a detailed description is given in \cite{bianchi2000performance}, we include here a brief description here for completeness. An IEEE 802.11 network divides time into MAC slots and a station transmits after observing $S_m$ idle slots, where $S_m$ is a random variable selected uniformly at random from $\{0,1,\dots, 2^m CW_{\text{min}} -1 \}$ where $CW_{\text{min}} \in \mathbb N$ is the minimum contention window and $m=0,1,2,\dots$ is the number of successive collisions experienced by the station. After a successful transmission $m$ is set to $0$. IEEE 802.11 defines a parameter $CW_{\text{max}}$ that limits the expected number of idle slots a station has to wait after $m$ successive collisions, i.e., $2^{\bar{m}} CW_{\text{min}} = CW_{\text{max}}$ for $m \ge \bar m$.

Two key features of WiFi systems relevant to our work are: ($i$) each 802.11 packet includes in the header information regarding the duration of the transmission, i.e., upon correct reception of a packet, a station knows the duration for which the channel will be busy; ($ii$) after a successful transmission all stations in the network wait for an Arbitration Inter-Frame Spacing (AIFS) time of at least \SI{34}{\micro \second}.\footnote{The AIFS time corresponds to the DCF Inter-Frame Space (DIFS) in DCF-based devices, and in the  \SI{5}{GHz} bands it has a duration of at least \SI{34}{\micro \second}.} That is, importantly, after each successful transmission there will be at least \SI{34}{\micro \second} during which the channel is free of WiFi transmissions.

\section{Orthogonal Airtime Coexistence}\label{sec:mechanism}

Our first objective is to design a MAC-layer protocol for a non-WiFi LBT-based WN (e.g. LTE-LAA cells) that fulfills the following criteria: ($i$) pose no harm to preexisting WiFi wireless networks (WNs), and ($ii$) improve MAC efficiency over WiFi WNs. 
To formalize these objectives, we first introducing the following definitions.

\begin{definition}\label{def:wifi}
We let ``WiFi WN'' define a wireless transmitter operating in unlicensed bands following the IEEE 802.11 specification. 
\end{definition}
\noindent By the above definition, WiFi WNs observe a slotted channel, where slot duration varies as the channel can be idle, occupied by a transmission (whose length depends on the payload size and bit rate), or contain a collision between two or more simultaneous WiFi WN transmissions. 

\begin{definition}\label{def:lbt}
We let ``LBT WN'' define a non-WiFi wireless transmitter operating in unlicensed bands using a Listen-before-Talk (LBT) access mechanism. ``Synchronous LBT WN'' refers to LBT WNs constrained to transmit in predefined fixed timeslots, and ``asynchronous LBT WN'' those which are not.
\end{definition}
\noindent We note that LTE-LAA may require synchronization with TDMA LTE framing in anchor licensed bands (synchronous LTE-LAA) or not, as in the case of MulteFire~\cite{multefire} (or asynchronous LTE-LAA). Importantly, we let ``WiFi/LBT WN'' refer to \emph{any} transmitter using unlicensed wireless channels \emph{including the WiFi AP in downlink scenarios and any WiFi station in uplink communication} so in general we will avoid making explicit reference to uplink or downlink.

Our coexistence approach builds upon the key observation that the minimum duration of an DIFS/AIFS (\SI{34}{\micro \second}) is longer than the CCA minimum time (\SI{20}{\micro \second}) specified by ETSI.
More specifically, to avoid inter-technology collisions, we assume that an LBT WN can acquire the channel if the medium is sensed idle for a LIFS = \SI{20}{\micro \second} (LBT inter-frame space) duration, a timing constant we introduce. Note that SIFS $<$ LIFS $<$ PIFS ($<$ AIFS/DIFS), which means LBT transmissions take priority, but cannot interrupt ongoing data--ACK exchanges (which are separated by a SIFS) in concurrent WiFi transmissions. Thus, \textbf{allowing idle channel acquisition after LIFS enables an LBT system to opportunistically exploit orthogonal collision-free airtime in unlicensed spectrum.} 

\begin{figure}
 \includegraphics[width=\columnwidth]{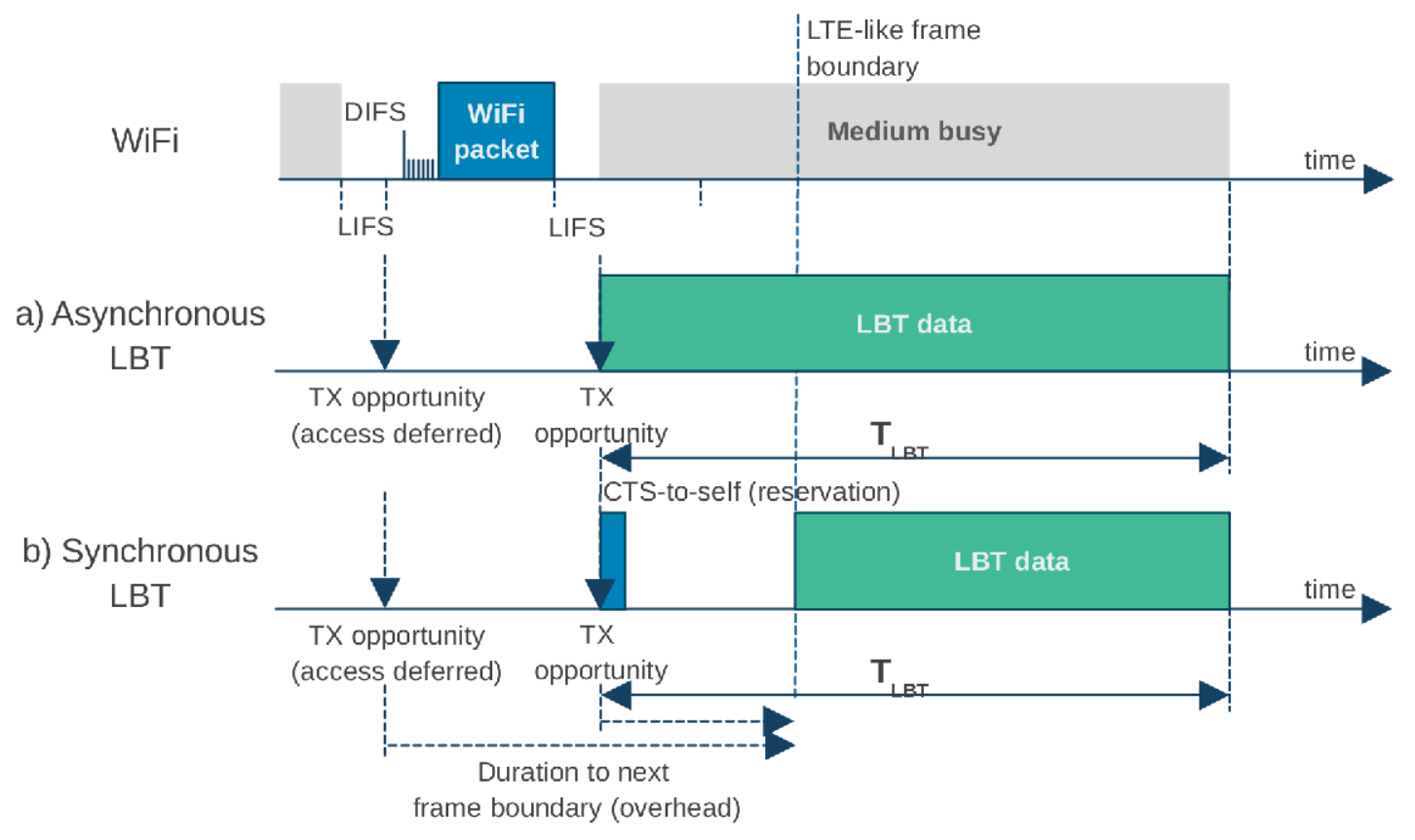}
 \caption{\small Proposed protocol operation with a WiFi WN and an (a)synchronous LBT WN. After a busy period with channel occupied by WiFi, LBT WNs have a transmission opportunity. In this example, the first transmission opportunity is skipped due to e.g. coexistence policy; the second opportunity is taken. In case of asynchronous access, the LBT WN delivers data immediately. In case of synchronous access, the channel is reserved (via a CTS-to-self message) and data is sent only at the boundary of the next frame.}
 \label{fig:protocol}
 \vspace{-2mm}
\end{figure}

To support \emph{synchronous LBT WNs} that need to postpone data transmissions until the beginning of a frame, we employ the 802.11 compliant CTS-to-self mechanism. This enables the synchronous LBT WN to reserve the channel and so ensure this frame alignment. Specifically, if the synchronous LBT WN senses the channel idle for LIFS, it can either send a CTS-to-self to reserve the medium until the next frame, or it can defer access until the next opportunity. In either case, when an asynchronous/synchronous LBT WN decides to acquire the channel, it will hold the channel for a fixed duration $T_{LBT}$, as permitted by regulations for both Frame Based Equipment (FBE) or Load Based Equipment (LBE) \cite{etsi-rule}. We exemplify the proposed \emph{modus operandi} in Fig.~\ref{fig:protocol}. 

Note that with the above scheme, transmissions by LBT and WiFi WNs will never collide, hence an LBT WN does not affect the transmission attempt probability of the stations in a WLAN.\footnote{This is true under certain assumptions which we discuss next and in \S\ref{sec:practical}.}  Therefore, the shared channel can be regarded as divided into two \emph{orthogonal} airtimes. In addition, since an LBT WN will always sense the channel \emph{idle} immediately after a WiFi transmission, the policies specified for FBE and LBE as to how to perform another CCA when the channel is sensed \emph{busy} are irrelevant to this work. Note however that collisions amongst LBT WNs could occur. In the rest of the paper we will assume there is only one LBT WN, which is in line with 3GPP~\cite{3gpp-rel13} and related work (e.g. \cite{7247522,canocoexistence,canocoexistence2,6655388}), and corresponds to the case of downlink offloading with cellular operators using different channels to circumvent interference. 
We discuss the case with multiple LBT WN sharing the same channel in \S\ref{sec:practical}.

Making use of the LBT access protocol introduced above, our second objective it to find \emph{optimal transmission policies for both synchronous and asynchronous LBT.} To this end, we finalize this section by formalizing our design criteria with two additional definitions.

\begin{definition}[\bf LBT transmission policy]
Index the LIFS transmission opportunities (the times when a station using our LBT access protocol could potentially transmit) by $\omega=1, 2, \ldots$ and gather these indices into set $\Omega \subset \mathbb{N}$. Then $\Pi\in 2^\Omega$ defines an \emph{LBT transmission policy}, where $2^\Omega$ denotes the set
of all possible subsets (the superset) of $\Omega$. An LBT station using policy $\Pi$ transmits one burst of duration $T_{LBT}$ at each LIFS transmission opportunity $\omega \in \Pi$.

\end{definition}
\begin{definition}[\bf Optimal policies]\label{def:fairness}
 An airtime-optimal  policy $\Pi^\dagger$ and a throughput-optimal policy $\Pi^\ddagger$ maximize, respectively, the channel time held by an LBT WN and the throughput of an LBT WN, while preserving (at least) the same amount of channel time each coexisting Wi	Fi WN would have if the LBT WN implemented a WiFi access protocol.   
\end{definition}
In what follows, we construct optimal LBT transmission policies for a range of scenarios.

\section{ORLA: Orthogonal Random LBT Unlicensed Access Strategy} \label{sec:orla}

In this section we devise optimal LBT transmission policies for asynchronous LBT WNs coexisting with WiFi. 
We first address airtime maximization under circumstances where WiFi contenders are backlogged (saturated) and operate with the same PHY bit rate (homogeneous), then focus on maximizing airtime usage when WiFi is lightly loaded (non-saturated) and WNs employ dissimilar rates (heterogeneous links).

\subsection{Homogeneous and Saturated Conditions}\label{sec:setup}

The scope of our work is MAC layer enhancement, therefore we assume ideal physical layer conditions, i.e LBT and WiFi WNs are within carrier sensing range of each other (no hidden terminals), no capture effect, and perfect PHY rate control mitigates channel errors. Thus losses due to fading are negligible,  whilst we consider practical multi-rate WiFi operation in \S\ref{sec:non-sat}. 
These assumptions not only ensure mathematical tractability but help us keep a focused analysis. We assume LBT WNs equipped with an off-the-shelf WiFi interface for channel sensing and medium access reservation purposes, in addition to the native LTE modem. We note this is common practice in coexistence mechanism design (see e.g.~\cite{7247522}) and we discuss in detail the practical implications of the above assumptions in \S\ref{sec:practical}. 

Consider a scenario with $n$ saturated WiFi WNs, i.e., each WN always has data ready for transmission. It is well known that under these conditions the transmission attempt probability $\tau^{(n)}_i$ of WN $i$ in random MAC slot can be related to the conditional collision probability $p^{(n)}$. Given the homogeneous load assumption, $\tau^{(n)} = \tau^{(n)}_i$ for all $i \in \{1,\dots,n\}$ (we relax this assumption later), $\tau^{(n)}$ can be computed by solving the following system of non-linear equations~\cite{1208922}:
\[
\begin{cases}
 \tau^{(n)} = \frac{2(1-2p^{(n)})}{(1-2p^{(n)})(CW_{\text{min}} + 1) + p^{(n)}CW_{\text{min}}(1- (2p^{(n)})^{m})},\\
 p^{(n)} = 1- (1-\tau^{(n)})^{n-1},
\end{cases}
\]
where recall $CW_{\min}$ is the contention window minimum parameter and $m$ is the backoff stage.
Then, the probability that a MAC slot is idle is given by the probability that none of the WNs transmits, i.e. 
$P^{(n)}_\idle = (1-\tau^{(n)})^n;$
the probability that a slot is occupied by a successful transmission is 
$P^{(n)}_\success =n p^{(n)}_\success$, where $p^{(n)}_\success = \tau^{(n)} (1-\tau^{(n)})^{n-1}$ is the probability that a single WN transmits in a MAC slot. Finally, the probability that a slot is occupied by a collision is given by $P^{(n)}_\collision = 1- P^{(n)}_\idle -  P^{(n)}_\success$ and the probability of a slot being busy is $P^{(n)}_\tx=P^{(n)}_\collision+P^{(n)}_\success$.  
The throughput of a WiFi WN is given by

\begin{align}
s^{(n)} = \frac{p^{(n)}_\success B}{P^{(n)}_\idle \sigma + (1-P^{(n)}_\idle)T} ,
\end{align}where $\sigma$, $B$, and $T$ are the duration of an (idle) MAC slot, the expected number of bits in a transmission, and the duration of a transmission (successful or collision) respectively, which is equal to
\small
\begin{align}
 T = T_\text{PLCP}\!+\!\frac{\! f_{\text{agg}} \left(L_{\text{del}} \!+\! L_{\text{mac-oh}} \!+\! L_{\text{pad}} \right) \!+B}{C}\!+\!\text{SIFS}\!+\!T_\text{ACK}\!+\!\text{DIFS}.\nonumber
\end{align}
\normalsize
SIFS, DIFS, $T_\text{PLCP}$, $L_\text{del}$, and $L_{\text{pad}}$ are PHY layer constants (inter-frame spacing, delimiters, padding), $L_{\text{mac-oh}}$ is the MAC layer overhead (header and FCS), $f_{\text{agg}}$ is the number of packets aggregated in a transmission, $B$ the expected number of data bits transmitted in the burst (payload), and $C$ the PHY bit rate. The duration of an acknowledgement is
\begin{align}
 T_\text{ACK} = T_\text{PLCP} +  \frac{L_{\text{ACK}}}{C_{\text{ctrl}}},\nonumber
\end{align}
where $C_{\text{ctrl}}$ is the bitrate used for control messages.

We aim to obtain the maximum fraction of orthogonal airtime that an LBT WN can use such that the average throughput experienced by a WiFi WN is not degraded more than if another WiFi WN were added to the network. Since LBT transmissions following the access procedure proposed in \S\ref{sec:mechanism} are orthogonal to WiFi transmissions, an LBT WN can be regarded (in terms of airtime) as a WiFi WN that transmits in MAC slots that otherwise would be idle. Then, the LBT airtime can be expressed as
\begin{align}
A _{\text{LBT}}= \rho P^{(n)}_\idle (T'-\sigma), 
\label{eq:airtime}
\end{align}
where $\rho \in [0,1]$ is the fraction of idle slots that would change to busy slots, and $(T' - \sigma):=T_\text{LBT} > 0$ is the duration of an LBT WN transmission, which depends on the LBT mode used (FBE or LBE). Note that the quantity $\rho P^{(n)}_\idle$ is the fraction of {orthogonal} LBT transmissions. With (\ref{eq:airtime}) we can write the throughput experienced by a WiFi WN when an LBT WN uses $A_{\text{LBT}}$ airtime, as follows:
\begin{align}
s^{(n+\text{LBT})} &:=  \frac{p^{(n)}_\success B}{P^{(n)}_\idle \sigma  + P^{(n)}_\tx T +  \rho P^{(n)}_\idle (T'-\sigma)}. \label{eq:txlbtineq}
\end{align}Next, since the throughput of a WiFi WN is non-increasing with the number of WNs, i.e., $s^{(n)} \ge s^{(n+1)}$ for every $n=1,2,\dots$, we have that 
\begin{align}
 s^{(n+1)} = \frac{p^{(n+1)}_\success B}{P^{(n+1)}_\idle \sigma + P^{(n+1)}_\tx T}  \le s^{(n+\text{LBT})}\label{eq:lbtthrouhput} 
\end{align}will always hold, provided $\rho$ in (\ref{eq:txlbtineq}) is sufficiently small. We are interested in finding the value of $\rho$ that makes (\ref{eq:lbtthrouhput}) tight, i.e., maximises the LBT airtime. 
We give the following lemma.

\begin{lemma} \label{th:barrho} 
Consider $n$ homogeneous saturated WNs. Suppose $T,T'  > \sigma$. Then, (\ref{eq:lbtthrouhput}) holds for every $\rho \in [0, \bar \rho]$ with 

\vspace{-0.75em}
{\small
\begin{align}
\bar \rho := 
\left(\frac{T-\sigma}{T'-\sigma}\right) \min \left \{1, \frac{P^{(n+1)}_\tx }{p^{(n+1)}_\success } \frac{p^{(n)}_\success}{P^{(n)}_\idle } - \frac{P^{(n)}_\tx}{P^{(n)}_\idle} \right \}  \label{eq:barrho}
\end{align}}
\end{lemma}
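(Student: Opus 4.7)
The plan is to treat inequality (\ref{eq:lbtthrouhput}) as an affine inequality in the single variable $\rho$. Since the numerator of $s^{(n+\text{LBT})}$ is independent of $\rho$ whilst its denominator is strictly increasing in $\rho$ (using $T'>\sigma$ and $P^{(n)}_\idle>0$), the map $\rho\mapsto s^{(n+\text{LBT})}$ is strictly decreasing, so the set $\{\rho\ge 0:s^{(n+1)}\le s^{(n+\text{LBT})}\}$ is an interval $[0,\rho_{\max}]$ and it suffices to exhibit the clean lower bound $\bar\rho\le\rho_{\max}$.

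First I would clear denominators in (\ref{eq:lbtthrouhput}) (each denominator being strictly positive because $T,T'>\sigma$) and solve the resulting affine inequality for $\rho$, obtaining
$$
\rho_{\max}=\frac{p^{(n)}_\success\bigl[P^{(n+1)}_\idle\sigma+P^{(n+1)}_\tx T\bigr]-p^{(n+1)}_\success\bigl[P^{(n)}_\idle\sigma+P^{(n)}_\tx T\bigr]}{p^{(n+1)}_\success\,P^{(n)}_\idle\,(T'-\sigma)}.
$$
Using $P_\idle=1-P_\tx$, each bracket rewrites as $\sigma+P_\tx(T-\sigma)$. Expanding and gathering the coefficients of $\sigma$ and $(T-\sigma)$ separately yields
$$
\rho_{\max}=\frac{\sigma\bigl(p^{(n)}_\success-p^{(n+1)}_\success\bigr)+(T-\sigma)\bigl(p^{(n)}_\success P^{(n+1)}_\tx-p^{(n+1)}_\success P^{(n)}_\tx\bigr)}{p^{(n+1)}_\success\,P^{(n)}_\idle\,(T'-\sigma)}.
$$

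Second, I would show the $\sigma$-summand in the numerator is non-negative by invoking the monotonicity of Bianchi's fixed point: as $n$ grows, the per-station attempt probability $\tau^{(n)}$ decreases and hence the per-station per-slot success probability $p^{(n)}_\success=\tau^{(n)}(1-\tau^{(n)})^{n-1}$ is non-increasing in $n$ (this is the fact that underlies the cited $s^{(n)}\ge s^{(n+1)}$). Dropping this non-negative summand preserves the inequality and gives exactly
$$
\rho_{\max}\ge\frac{T-\sigma}{T'-\sigma}\left(\frac{P^{(n+1)}_\tx\,p^{(n)}_\success}{p^{(n+1)}_\success\,P^{(n)}_\idle}-\frac{P^{(n)}_\tx}{P^{(n)}_\idle}\right).
$$
Finally, because $\rho$ is a fraction of idle slots it is feasible only when $\rho\le 1$; intersecting with this cap amounts to taking the minimum of the bracketed factor above with $1$ before pulling out $(T-\sigma)/(T'-\sigma)$, which recovers precisely the closed form of $\bar\rho$ stated in (\ref{eq:barrho}).

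The main obstacle I anticipate is justifying the monotonicity $p^{(n)}_\success\ge p^{(n+1)}_\success$ rigorously: although intuitive and consistent with 802.11 saturation analysis, it does not follow mechanically from $s^{(n)}\ge s^{(n+1)}$ because the denominator $P^{(n)}_\idle\sigma+P^{(n)}_\tx T$ of $s^{(n)}$ grows with $n$, so the numerator is free to grow too without contradicting the throughput inequality. A clean argument must instead exploit the structure of the Bianchi fixed point, namely that $\tau^{(n)}$ decreases in $n$ and the factor $(1-\tau^{(n)})^{n-1}$ decays (because the growth of the exponent dominates the slight increase of the base); in a terse writeup one may simply cite this as a standard property to avoid a detour.
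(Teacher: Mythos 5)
Your proposal is correct and follows essentially the same route as the paper's proof: clear denominators to get an affine inequality in $\rho$, rewrite each denominator via $P_\idle=1-P_\tx$ as $\sigma+P_\tx(T-\sigma)$, discard the non-negative $\sigma$-summand, and cap at $1$, with the only cosmetic difference being that you keep $T'$ general throughout whereas the paper first sets $T'=T$ and then rescales by $(T-\sigma)/(T'-\sigma)$. The monotonicity $p^{(n)}_\success\ge p^{(n+1)}_\success$ that you rightly flag as the one step needing justification is also used silently in the paper (in the step that lower-bounds $\tfrac{T}{T-\sigma}\bigl(\tfrac{p^{(n)}_\success}{p^{(n+1)}_\success}-1\bigr)$ by $\tfrac{p^{(n)}_\success}{p^{(n+1)}_\success}-1$, which is only valid when that quantity is non-negative), so your explicit treatment of it is, if anything, the more careful one.
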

\begin{IEEEproof}
Rearranging terms in (\ref{eq:lbtthrouhput}) with $P_\tx = (1- P_\idle)$ and $A = \rho P^{(n)}_\idle (T'-\sigma)$ we have that
{\begin{align}
\frac{p^{(n)}_\success }{p^{(n+1)}_\success }
& \ge
\frac{P^{(n)}_\idle ( \sigma - T) +   T  + \rho P^{(n)}_\idle (T'-\sigma) } {P^{(n+1)}_\idle (\sigma - T) +   T } .
\end{align}}Further rearranging we obtain that
{\begin{align*}
&  \rho P^{(n)}_\idle (T'-\sigma)  \\
&  \le \frac{p^{(n)}_\success }{p^{(n+1)}_\success }( P^{(n+1)}_\idle ( \sigma - T) +   T)   - P^{(n)}_\idle ( \sigma - T) -    T,  \\
&  =  T \left(  \frac{p^{(n)}_\success }{p^{(n+1)}_\success }-1 \right) + \left( P^{(n)}_\idle - \frac{p^{(n)}_\success }{p^{(n+1)}_\success } P^{(n+1)}_\idle \right)(T - \sigma) ,  
\end{align*}}and dividing by $P^{(n)}_\idle (T' - \sigma)$ yields
{\begin{align*}
\rho & \le  \frac{T}{P^{(n)}_\idle (T'-\sigma)} \left(  \frac{p^{(n)}_\success }{p^{(n+1)}_\success }-1 \right) 
+  \left(1 - \frac{p^{(n)}_\success }{p^{(n+1)}_\success } \frac{P^{(n+1)}_\idle}{P^{(n)}_\idle}\right).
\end{align*}}Now fix $T'= T$ and see that since $T / (T - \sigma) > 1$ we have
{\begin{align}
\rho & \le  \frac{1}{P^{(n)}_\idle } \left(  \frac{p^{(n)}_\success }{p^{(n+1)}_\success } - 1   + {P^{(n)}_\idle}  - \frac{p^{(n)}_\success }{p^{(n+1)}_\success } {P^{(n+1)}_\idle}\right), \notag \\
& \le \frac{P^{(n+1)}_\tx }{p^{(n+1)}_\success } \frac{p^{(n)}_\success}{P^{(n)}_\idle } - \frac{P^{(n)}_\tx}{P^{(n)}_\idle} ,  \label{eq:prerho}
\end{align}}where in (\ref{eq:prerho}) we have used the fact that $1-P_\idle  = P_\tx$ and $\rho \le 1$.  Finally, when $T'\ne T$, since all that matters is the total airtime $A_{\text{LBT}}$ given in (\ref{eq:airtime}), if we multiply (\ref{eq:prerho}) by $(\frac{T-\sigma}{T'-\sigma})$ the stated result follows.
\end{IEEEproof}
With Lemma \ref{th:barrho} we can obtain the fraction of orthogonal/successful LBT transmissions ($\rho P^{(n)}_\idle$) of expected duration $T_{\text{LBT}} =T'-\sigma$ that can be accommodated in order to be compliant with our coexistence criterion.
Importantly, the bound in (\ref{eq:barrho}) depends on $P_\tx^{(n+1)}$ and $p_\success^{(n+1)}$, however, in saturation conditions a very good approximation of these values can be easily obtained \cite{bianchi2000performance}.

Observe that we can write 
\begin{align*}
& P^{(n)}_\idle \sigma + (1-P^{(n)}_\idle)  T + \rho P^{(n)}_\idle (T' - \sigma) \\
& =P^{(n)}_\idle \sigma + (1-P^{(n)}_\idle) (T  +  \rho P^{(n)}_\idle \frac{T' - \sigma}{1-P^{(n)}_\idle}) \\
& = P^{(n)}_\idle \sigma + (1-P^{(n)}_\idle) (T  + \pi(T'-\sigma)),
\end{align*}
where 
\begin{align}
 \pi(\rho) := \min \left\{ 1, \rho \frac{ P^{(n)}_\idle}{1- P^{(n)}_\idle } \right \}.
 \label{eq:optpi}
\end{align}
That is, an LBT WN will be compliant with our coexistence criterion as long as it takes a fraction $\pi(\bar \rho)$ of collision-free LIFS opportunities after a busy slot (successful or collision), where $\bar \rho$ is given in Lemma~\ref{th:barrho}. In this way, we can establish the following optimal transmission policy, hereafter referred to as ``Orthogonal Random LBT Unlicensed Access'' (ORLA):

\begin{theorem}[\bf ORLA transmission policy]\label{th:orla}
 Consider a policy $\Pi^{\text{ORLA}}$ by which an (LBT) WN initiates transmissions for a fixed duration $T_{LBT}$ after a LIFS opportunity $\omega\in\Omega$ with probability $\pi( \rho)$:
\begin{align}
\Pi^{ORLA}(\rho) := \{ \omega \in \Omega \mid \mathcal{U}_\omega (\rho) = 1 \},\nonumber
\end{align}
where $\mathcal{U}_\omega (\rho), \omega \in \Omega$ are random variables taking values 0 or 1 such that $\mathrm{Pr}(\mathcal{U}_\omega (\rho) = 1) = \pi(\rho)$.
In an homogeneous scenario where all WiFi WNs are saturated and have the same channel access configuration, $\Pi^{ORLA}(\bar\rho)$ is an airtime-optimal policy $\Pi^\dagger$ and a throughput-optimal transmission policy $\Pi^\ddagger$ for synchronous LBT and asynchronous LBT, respectively. 
\end{theorem}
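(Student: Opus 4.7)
My plan is to combine Lemma~\ref{th:barrho} with a renewal-style accounting of ORLA's airtime, first certifying airtime-optimality (which suffices for the synchronous claim) and then upgrading to throughput-optimality in the asynchronous case by observing that airtime converts directly into bits when there is no frame-alignment gap.

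The first step is to argue that ORLA does not disturb the WiFi contention process. Since each $\mathcal{U}_\omega(\rho)$ is drawn independently of WiFi's backoff counters and, by the orthogonality property of \S\ref{sec:mechanism}, an ORLA transmission never collides with WiFi nor delays a contender's countdown beyond the usual DIFS, the DCF fixed point $(\tau^{(n)}, p^{(n)})$ and the derived quantities $P^{(n)}_\idle$, $p^{(n)}_\success$, $P^{(n)}_\tx$ retain the values obtained from the Bianchi system in \S\ref{sec:setup}. This justifies reusing those formulas inside the proof, on both sides of the coexistence inequality~(\ref{eq:lbtthrouhput}).

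Next I would translate the per-opportunity Bernoulli probability into an aggregate airtime. Each WiFi slot is idle with probability $P^{(n)}_\idle$ and busy with probability $1-P^{(n)}_\idle$, and a LIFS opportunity opens only after a busy slot. By the strong law of large numbers on this renewal sequence, the long-run airtime consumed by ORLA with parameter $\pi$ equals $\pi(1-P^{(n)}_\idle)(T'-\sigma)$, which when equated with $\rho P^{(n)}_\idle(T'-\sigma)$ recovers the identity displayed just before~(\ref{eq:optpi}) and gives $\pi(\rho)$ as in~(\ref{eq:optpi}). Equivalently, ORLA with parameter $\rho$ realises the airtime $A_{LBT}$ of~(\ref{eq:airtime}). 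Lemma~\ref{th:barrho} then delivers the feasibility boundary: the coexistence inequality~(\ref{eq:lbtthrouhput}), and therefore the criterion of Definition~\ref{def:fairness}, is satisfied for every $\rho \in [0,\bar\rho]$. Since $A_{LBT}(\rho)$ is strictly increasing in $\rho$, $\Pi^{\text{ORLA}}(\bar\rho)$ maximises $A_{LBT}$ over coexistence-compliant policies of this class, which is airtime-optimality and yields $\Pi^{\text{ORLA}}(\bar\rho)=\Pi^\dagger$ for synchronous LBT. For the asynchronous case I would close by observing that in asynchronous mode every LBT transmission of length $T_{LBT}$ carries payload at a constant effective bit rate, with no CTS-to-self reservation idle gap preceding a frame boundary, so LBT throughput is a strictly increasing affine function of $A_{LBT}$. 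Airtime maximisation therefore coincides with throughput maximisation, giving $\Pi^{\text{ORLA}}(\bar\rho)=\Pi^\ddagger$ in the asynchronous regime.

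The main obstacle I expect is the invariance step: cleanly justifying that the introduction of ORLA leaves the DCF statistics unchanged. Without it the entire airtime accounting collapses, because both sides of~(\ref{eq:lbtthrouhput}) rely on the same Bianchi fixed-point equations that were derived assuming only $n$ WiFi contenders. A smaller but non-trivial point is verifying that the $\min\{1,\cdot\}$ truncation in~(\ref{eq:optpi}) is consistent with the one built into~(\ref{eq:barrho}), so that $\pi(\bar\rho)\le 1$ and ORLA is well-defined without further clipping.
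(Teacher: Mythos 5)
Your proposal is correct and follows essentially the same route as the paper: the paper's proof of Theorem~\ref{th:orla} is exactly ``follows from Lemma~\ref{th:barrho},'' with the supporting step being the identity displayed just before~(\ref{eq:optpi}) that converts the idle-slot fraction $\rho P^{(n)}_\idle$ into the per-busy-slot Bernoulli probability $\pi(\rho)=\min\{1,\rho P^{(n)}_\idle/(1-P^{(n)}_\idle)\}$, which is precisely your renewal accounting. Your added remarks on the invariance of the DCF fixed point (guaranteed by the orthogonality of \S\ref{sec:mechanism}) and on the consistency of the two $\min\{1,\cdot\}$ truncations are sound elaborations of points the paper leaves implicit.
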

\begin{proof}
 The proof follows directly from Lemma~\ref{th:barrho}.
\end{proof}

\subsection{Non-saturation and Heterogeneous Conditions}
\label{sec:non-sat}
Next we generalize the results above to heterogeneous conditions in terms of WiFi packet arrival rates and link qualities. Recall that since cellular deployments work permanently on licensed frequencies, we only investigate the performance of supplemental downlink services exploiting the unlicensed band for best effort data transfers. As such, we still consider backlogged LBT WN newcomers (i.e. always having data to transmit) and study the performance of the proposed system with ($i$) practical multi-rate WiFi operation, thus focusing~on airtime instead of throughput fairness, while \emph{(ii)} utilizing additional airtime released by WiFi WNs with finite loads.

We proceed by extending the ORLA transmission policy to guarantee that the aggregate channel time of $n$ (non-saturated) WiFi WNs when we add a \emph{saturated} WiFi WN, $A^{(n+1)}$, remains constant or larger than the aggregate channel time of $n$ \emph{saturated} WiFi WNs when we add the LBT WN, $A^{(n_{\text{sat}}+\text{LBT})}$. Formally,
\begin{align}\label{eq:a_n+1}
 A^{(n+1)} = \frac{ \sum\limits_{i=1}^{n} p^{(n+1)}_{\success,i} T_{s,i}}{T_{\text{slot}}^{(n+1)}}
\end{align}
where 
\begin{align}
 T_{\text{slot}}^{(n+1)} =  {P^{(n+1)}_\idle}\sigma + \sum\limits_{i=1}^{n+1} p^{(n+1)}_{\success,i}T_{s,i} + P^{(n+1)}_\collision T_c, \nonumber
\end{align}
with $p^{(n+1)}_{\success,i}$ being the probability that WN $i$ transmits successfully in a MAC slot, $T_{s,i} = T_\text{PLCP}+[f_{\text{agg,i}} (L_{\text{del}} \!+\! L_{\text{mac-oh}} \!+\! L_{\text{pad}}) \!+B_i]/C_i+\text{SIFS}+T_\text{ACK}+\text{DIFS}$ the duration of a slot when WN $i$ transmits successfully, and $T_c$ the time the channel remains busy during a collision. Note that the numerator in~\eqref{eq:a_n+1} sums over the $n$ (non-saturated) WiFi WNs and not over all the WNs in the system.

To compute the WiFi WNs' transmission attempt rates $\tau_i, i = 1,\ldots,n$, we first rewrite the conditional collision probability $p^{(n)}_i$ that the frames transmitted by WN $i$ experience:
\begin{equation}
  p^{(n)}_i = 1 - \prod_{k=1,k \neq i}^{n} \left(1-\tau^{(n)}_k\right).
\label{eq:pcol}
\end{equation}

We use a renewal-reward approach to model the WiFi BEB scheme in the presence of different packet arrival rates. To avoid notation clutter, we drop the $i$ and ${(n)}$ indexes when there is no scope for confusion. The transmission attempt rate of a WiFi contender can be thus expressed as 
\[
\tau=\frac{E[A]}{E[S]},
\]
where $E[A]$ is the expected number of attempts to transmit a packet burst and $E[S]$ is the expected number of slots used during back-off, which we compute as follows:
\[
 E[A]=1+p+p^2+...+p^M, 
\]
\[
 E[S]=t_\text{\text{idle}}+b_0+p b_1+p^2 b_2+...+p^M b_M,
\] 
where $M$ is the maximum number of retries (which we assume equal to the maximum back-off stage $\bar{m}$) and $b_m$ is the mean length of back-off stage $m$ expressed in slots. $t_\text{\text{idle}}$ is the mean idle time that a contender waits for new content after a transmission. Thus, we can express the transmission attempt rate of a WiFi transmitter as
\begin{equation}
\tau=\frac{E[A]}{E[S]}=\frac{1+p+p^2+...+p^M}{t_i+b_0+p b_1+p^2 b_2+...+p^M b_M}.
\label{eq:tau}
\end{equation}
We apply the above to relate $\tau_i^{(n)}$ to $p_i^{(n)}, \forall i$. Neglecting post-backoff and assuming no buffering, we can write 
\begin{equation}
 t_\text{\text{idle}} = q(1+2(1-q)+3(1-q)^2+...) = \frac{1}{q},
\end{equation}
where $q$ is the probability that a new frame arrives in a uniform slot time $T_\text{slot}$. Note that, assuming Poisson arrivals, we can related $q$ to a WN offered load $\lambda$ as $\lambda = -\log(1-q)/T_\text{slot}$.

Analogously, 
\begin{align}
 A^{(n_{\text{sat}}+\text{LBT})} = \frac{ \sum\limits_{i=1}^{n_{\text{sat}}} p^{(n_{\text{sat}})}_{\success,i} T_{s,i}}{ T_{\text{slot}}^{(n_{\text{sat}}+\text{LBT})} },
\end{align}
where 
\begin{align}
 T_{\text{slot}}^{(n_{\text{sat}}+\text{LBT})} = T_{\text{slot}}^{(n_{\text{sat}})} + \rho P^{(n_{\text{sat}})}_\idle \left ( T'-\sigma \right ).\nonumber
\end{align}
The remaining task is to find $\rho$, such that the following condition is satisfied:
\begin{align}\label{eq:airtimemax}
 A^{(n_{\text{sat}}+\text{LBT})} \ge A^{(n+1)}.
\end{align}
This guarantees that ($i$) pre-existing WiFi WNs satisfy their traffic demands as if a saturated WiFi WN would be added to the system, and ($ii$) the LBT WN maximizes the channel time devoted to transmission. 

Proceeding similarly to Lemma~\ref{th:barrho}, we obtain the following:
\begin{lemma}
\label{lem:bound}
 In a scenario with $n$ WiFi WNs operating with different offered loads and transmission bit rates, which shares the channel with an LBT WN, (\ref{eq:airtimemax}) holds for every $\rho \in [0,\bar{\rho}]$, where $\bar\rho$ is computed as:
 \fontsize{9.5}{11}
 \begin{align}\label{eq:th:barrho}
 \bar{\rho} := \frac{1}{\rho P^{(n_{\text{sat}})}_\idle \left ( T'\!\!-\!\sigma \right )}\left[    \frac{\sum\limits_{i=1}^{n_{\text{sat}}} p^{(n_{\text{sat}})}_{\success,i} T_{s,i} }{\sum\limits_{i=1}^{n} p^{(n+1)}_{\success,i} T_{s,i}} T_{\text{slot}}^{(n+1)} \!\!-\!T_{\text{slot}}^{(n_{\text{sat}})} \right].
 \end{align}
 \normalsize
\end{lemma}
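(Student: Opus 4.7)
The plan is to mirror the manipulations used in the proof of Lemma~\ref{th:barrho}, but now with the heterogeneous, non-saturated airtime expressions in place of the single-rate throughput expressions. The coexistence criterion has been formalized in~\eqref{eq:airtimemax} as $A^{(n_{\text{sat}}+\text{LBT})} \ge A^{(n+1)}$, so the task reduces to solving this inequality for $\rho$. All of the quantities appearing in $A^{(n+1)}$ (transmission probabilities, slot durations, collision probabilities) are independent of $\rho$, since they describe the reference scenario in which a saturated WiFi WN, rather than the LBT WN, would be added; they can therefore be treated as constants throughout the derivation.

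First I would substitute the definitions of $A^{(n_{\text{sat}}+\text{LBT})}$ and $A^{(n+1)}$ into~\eqref{eq:airtimemax}, and observe that the only place $\rho$ enters is through the additive term $\rho P^{(n_{\text{sat}})}_\idle(T'-\sigma)$ inside $T_{\text{slot}}^{(n_{\text{sat}}+\text{LBT})} = T_{\text{slot}}^{(n_{\text{sat}})} + \rho P^{(n_{\text{sat}})}_\idle(T'-\sigma)$. Since both denominators are strictly positive (each slot-duration is a convex combination of positive $\sigma$, $T_{s,i}$, $T_c$ terms), I can cross-multiply without flipping the inequality. This yields
\begin{equation*}
\Bigl(\textstyle\sum_{i=1}^{n_{\text{sat}}} p^{(n_{\text{sat}})}_{\success,i}T_{s,i}\Bigr) T_{\text{slot}}^{(n+1)} \ \ge\ \Bigl(\textstyle\sum_{i=1}^{n} p^{(n+1)}_{\success,i}T_{s,i}\Bigr) \bigl(T_{\text{slot}}^{(n_{\text{sat}})} + \rho P^{(n_{\text{sat}})}_\idle(T'-\sigma)\bigr).
\end{equation*}

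Next I would isolate the $\rho$ term: dividing both sides by $\sum_{i=1}^{n} p^{(n+1)}_{\success,i}T_{s,i}$ (positive, since at least one non-saturated WN has a non-zero throughput contribution) and moving $T_{\text{slot}}^{(n_{\text{sat}})}$ to the left gives an upper bound on $\rho P^{(n_{\text{sat}})}_\idle(T'-\sigma)$. Dividing once more by $P^{(n_{\text{sat}})}_\idle(T'-\sigma)$ (positive, since idle slots occur with nonzero probability and $T'>\sigma$ by construction) delivers exactly the bound~\eqref{eq:th:barrho}. Monotonicity in $\rho$ of $T_{\text{slot}}^{(n_{\text{sat}}+\text{LBT})}$, together with the fact that $A^{(n_{\text{sat}}+\text{LBT})}$ is a decreasing function of $\rho$, then guarantees the inequality holds for the whole interval $[0,\bar\rho]$ rather than just at $\rho=\bar\rho$.

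The algebra itself is straightforward; the main subtleties I would need to flag are essentially bookkeeping. First, unlike Lemma~\ref{th:barrho}, there is no simplifying cancellation from a homogeneous $p_\success$, so the ratio of weighted success probabilities $\sum p^{(n_{\text{sat}})}_{\success,i}T_{s,i}/\sum p^{(n+1)}_{\success,i}T_{s,i}$ must be carried symbolically throughout. Second, the bound is only meaningful when the bracketed quantity in~\eqref{eq:th:barrho} is non-negative; I would argue this on physical grounds by noting that adding a saturated WiFi contender increases the collision/transmission load more aggressively than the orthogonal LBT WN using any $\rho\in[0,1]$, so the coexistence inequality is feasible. Finally, as in Theorem~\ref{th:orla}, the result should implicitly be clipped to $\rho \in [0,1]$, which can then be folded into the transmission probability $\pi(\bar\rho)$ exactly as in~\eqref{eq:optpi}.
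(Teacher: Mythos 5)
Your proposal is correct and follows essentially the same route as the paper's own (very terse) proof: rearrange \eqref{eq:airtimemax} into the ratio inequality between the weighted success-probability sums and the slot durations, cross-multiply, and isolate $\rho$. One small observation worth recording: your derivation yields the prefactor $1/\bigl(P^{(n_{\text{sat}})}_\idle(T'-\sigma)\bigr)$, which indicates that the additional $\rho$ appearing inside the denominator of \eqref{eq:th:barrho} as printed is a typo (otherwise the definition of $\bar\rho$ would be circular).
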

\begin{IEEEproof}
 Rearranging the terms in \eqref{eq:airtimemax} yields $$\frac{\sum\limits_{i=1}^{n_{\text{sat}}} p^{(n_{\text{sat}})}_{\success,i} T_{s,i} }{\sum\limits_{i=1}^{n} p^{(n+1)}_{\success,i} T_{s,i}} \le \frac{T_{\text{slot}}^{(n+1)}}{T_{\text{slot}}^{(n_{\text{sat}})} + \rho P^{(n_{\text{sat}})}_\idle \left ( T'\!\!-\!\sigma \right )}.$$ 
 Then, the result in~\eqref{eq:th:barrho} can be verified by proceeding similarly to the proof of Lemma~\ref{th:barrho}.
\end{IEEEproof}

By applying Lemma~\ref{lem:bound} and estimating the packet arrival rates at the WiFi WNs, the LBT WN can maximize channel utilization, leading to the following extension to Theorem~\ref{th:orla}.
\begin{corollary}[\bf ORLA transmission policy]
Given the ORLA transmission policy $\Pi^{\text{ORLA}}(\rho)$ defined in Theorem~\ref{th:orla} and a heterogeneous scenario where WiFi WNs transmit at different rates and have different loads, $\Pi^{\text{ORLA}}(\bar\rho)$ is an airtime-optimal policy $\Pi^\dagger$ and a throughput-optimal transmission policy $\Pi^\ddagger$ for synchronous LBT and asynchronous LBT, respectively, where $\bar\rho$ is derived with Lemma~\ref{lem:bound}.
\end{corollary}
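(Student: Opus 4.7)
The plan is to prove the corollary as a direct extension of Theorem~\ref{th:orla}, with Lemma~\ref{lem:bound} playing the role that Lemma~\ref{th:barrho} played in the homogeneous/saturated case. First I would observe that under policy $\Pi^{ORLA}(\rho)$ the LBT WN transmits at each LIFS opportunity with probability $\pi(\rho)$, so over a long horizon the induced LBT airtime equals $\rho P^{(n_{\text{sat}})}_\idle (T'-\sigma)$, which is strictly monotone increasing in $\rho$ on $[0,1]$. This monotonicity is what converts the inequality bound of Lemma~\ref{lem:bound} into an extremal characterisation.

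Next I would invoke Lemma~\ref{lem:bound}: for every $\rho \in [0,\bar\rho]$ with $\bar\rho$ as in \eqref{eq:th:barrho}, the coexistence condition $A^{(n_{\text{sat}}+\text{LBT})} \ge A^{(n+1)}$ of Definition~\ref{def:fairness} is satisfied, i.e., the aggregate channel time devoted to the pre-existing WiFi WNs when the LBT newcomer is added remains at least as large as if one further saturated WiFi WN had joined instead. Combined with the monotonicity above, setting $\rho=\bar\rho$ yields the largest admissible LBT airtime, so $\Pi^{ORLA}(\bar\rho)$ is an airtime-optimal policy $\Pi^\dagger$. For the asynchronous case, every acquired LIFS opportunity is immediately followed by a data transmission of duration $T_{LBT}$, so airtime and throughput are proportional and the policy is also throughput-optimal ($\Pi^\ddagger$). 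For the synchronous case the CTS-to-self wait until the next frame boundary breaks the direct airtime-to-throughput correspondence, so only airtime-optimality is asserted, exactly as in Theorem~\ref{th:orla}.

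The main subtlety --- which the corollary inherits from Lemma~\ref{lem:bound} rather than introducing anew --- is the coupling among the WiFi WNs in the heterogeneous/non-saturated regime: the per-WN attempt rates $\tau_i^{(n)}$ are only implicitly defined through the joint fixed-point system built from \eqref{eq:pcol} and \eqref{eq:tau}, and the idle/success probabilities appearing in \eqref{eq:th:barrho} depend on this fixed point. I would therefore frame the corollary as an existence-and-characterisation result: once the fixed point is computed (numerically, or estimated online from the arrival probabilities $q_i$ and the observed $p^{(n+1)}_{\success,i}$), substituting into \eqref{eq:th:barrho} produces the optimal ORLA parameter, and the remaining algebra mirrors the proof of Theorem~\ref{th:orla} verbatim. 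The hard part of the whole argument is thus really in Lemma~\ref{lem:bound}; the corollary itself is a short composition of that bound, the monotonicity of LBT airtime in $\rho$, and the asynchronous-versus-synchronous distinction already handled in Theorem~\ref{th:orla}.
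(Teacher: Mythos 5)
Your proposal is correct and follows essentially the same route as the paper, whose own proof is the one-line statement that the result follows directly from Lemma~\ref{lem:bound}; your elaboration (monotonicity of the LBT airtime in $\rho$ turning the admissible interval $[0,\bar\rho]$ into an extremal characterisation, plus the asynchronous/synchronous distinction carried over from Theorem~\ref{th:orla}) simply makes explicit what the paper leaves implicit. The remark about the fixed-point coupling of the $\tau_i^{(n)}$ being the real computational burden is accurate but belongs to Lemma~\ref{lem:bound} rather than to the corollary, exactly as you say.
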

\begin{proof}
 The proof follows directly from Lemma~\ref{lem:bound}.
\end{proof}

Observe that, when the aggregate WiFi load is low, \mbox{$\pi(\rho) \rightarrow 1$}, which may lead to underutilization of available channel time, as the LBT WN has insufficient LIFS opportunities to transmit. To address this issue, we let the LBT WN transmit $N$ bursts back-to-back (separated each by a LIFS), where $N$ can be computed as
\begin{align}
 N := \max \left \{ 1, \rho \frac{ P^{(n)}_\idle}{1- P^{(n)}_\idle} \right\}, \label{eq:orla-aggregation}
\end{align}
and adjust the duration of the last burst accordingly. The corner case where there is no WiFi transmission whatsoever is discussed in \S\ref{sec:practical}.

\section{OLAA: Optimal Orthogonal LAA \\Access Strategy}\label{sec:olaa}

The ORLA transmission strategy introduced above maximizes the channel airtime an LBT WN can use while preserving the fairness constraint in Definition~\ref{def:fairness}. Since the amount of airtime held by an LBT WN when transmitting is fixed and equal to $T_{LBT}$, the ORLA policy also maximizes the throughput performance of asynchronous LBT WNs. However, this is not true for \emph{synchronous LBT WNs}, since the amount of airtime between the start of a transmission and the frame boundary this must synchronize with is effectively overhead, i.e., cannot be exploited for actual data communication, as illustrated in Fig.~\ref{fig:protocol}. Indeed, ORLA may prove highly inefficient in terms of throughput for synchronous LBT WNs, even providing less throughput than if following the WiFi specification, e.g., when the time until the next frame boundary is comparable to $T_{LBT}$. 


Recall that an LBT WN has to make a choice about whether to transmit or not every LIFS opportunity. If it decides to transmit at LIFS opportunity $\omega$, a synchronous LBT WN will send useful data for an amount of time equal to $Y^{(\omega)}= T_{LBT}-T_{res}$, where $T_{res}$ is the (random) time it takes between a LIFS opportunity and the closest frame boundary (see Fig.~\ref{fig:protocol}). Otherwise, it skips a \emph{round} and waits for the next LIFS opportunity $\omega+1$ (or round). We can regard this as an investment of time on each LIFS opportunity $\omega$ yielding a gross gain equal to $Y^{(\omega)}$ of useful channel time. Since this process is repeated over time, our goal is to design a policy $\Pi$ that maximizes the expected long-term \emph{rate of return}. 
Clearly, a naive policy that \emph{takes all} LIFS opportunities to transmit will not necessarily maximize the rate of return of a synchronous LBT WN, since the amount of time \emph{wasted} on waiting for frame boundaries may exceed the time invested in skipping LIFS opportunities. This resembles well-known problems in optimal stopping theory~\cite{ferguson}. 

Suppose now the above process is repeated $K$ times. Let $\{\omega_1, \cdots, \omega_k\}$ denote LIFS opportunities that have been taken (referred to as \emph{stopping times}), $Y^{(\omega_k)}$ the useful channel time (reward) obtained at opportunity $\omega_k$, and $\psi^{(\omega_k)}$ the time invested by doing so. Then, by the law of large numbers, 
$$
 \frac{\sum_{i=1}^K {Y^{(\omega_i)}}}{\sum_{i=1}^K \psi^{(\omega_i)}} \longrightarrow \frac{E[Y_\Pi]}{E[\psi_\Pi]}\, \text{a.s.}
$$
Then, we can cast the problem of maximizing the long-term average goodput of the synchronous LBT WN as a maximal-rate-of-return problem and use optimal stopping theory to solve it~\cite{ferguson}. In this way, we  need to characterize an optimal stopping rule $\Pi^\ddagger$ as
$$
\Pi^\ddagger := \argmax_{\Pi\in\boldsymbol{\Pi}} \frac{E[Y_\Pi]}{E[\psi_\Pi]}
$$
and the optimal LBT WN goodput as
$$
\lambda^\ddagger := \sup_{\Pi\in\boldsymbol{\Pi}} \frac{E[Y_\Pi]}{E[\psi_\Pi]}.
$$
Following \cite[Ch. 4, Th. 1]{ferguson}, we transform our maximal-rate-of-return problem into the following ordinary stopping rule problem:
\begin{align}
&\max E \left [ Y_\Pi - \lambda \left ( \sum_{i=1}^{\omega\in\Pi} T_{\text{slot}}K^{(i)} + T_{LBT} \right ) \right ],
\end{align}
where $K_i$ is the number of WiFi slots between LIFS opportunities $i$ and $i-1$. The intuition behind the above problem is that we invest $c_{\Pi}:=  \lambda\sum_{i=1}^\Pi T_{\text{slot}}K^{(i)}$ and gain $X_{\Pi}:= Y_{\Pi} - \lambda T_{LBT}$ in return, when we use stopping rule $\omega\in\Pi$. Then, according to \cite[Ch. 6, Th. 1]{ferguson}, the optimal rule $\Pi^\ddagger$ and the optimal throughput $\lambda^\ddagger$ is such that $$V^\ddagger(\lambda^\ddagger):= \sup_{\Pi\in\boldsymbol{\Pi}} E[Y_{\Pi} - \lambda \psi_{\Pi} ] = 0.$$ It can be shown then that our stopping rule is
$$
\Pi^\ddagger = \min\{ \omega \ge 1 \mid X_{\Pi} \ge V^\ddagger\}
$$
and that $V^\ddagger$ satisfies the optimality equation
$$
V^\ddagger = E\left [ \max \left \{ X^{(1)}, V^\ddagger \right \} \right ] - c^{(1)}.
$$
Given that $V^\ddagger(\lambda^\ddagger)=0$ and that $Y_{\Pi}$ is i.i.d, the above becomes
$$
E\left [ \max \left \{ Y-\lambda^\ddagger\left ( \frac{T_{\text{slot}}}{1-P_{\text{idle}}} + T_{LBT} \right ), -\lambda^\ddagger \frac{T_{\text{slot}}}{1-P_{\text{idle}}} \right \} \right ] = 0,
$$
since $E[K]=1/(1-P_{\text{idle}})$, and hence

\begin{align}
E\left [ Y - \lambda^\ddagger T_{LBT} \right ]^+ = \lambda^\ddagger\frac{T_{\text{slot}}}{1-P_{\text{idle}}}. \label{eq:fixed-point}
\end{align}
This is a fixed point equation that can be solved with iterative methods and the optimal rule is computed as 
\begin{align}
\Pi^\ddagger &= \min\left\{ \omega \ge 1 \mid Y_{\Pi}  \ge \lambda^\ddagger T_{LBT}\right\}\nonumber\\
&= \min\left \{ \omega \ge 1 \mid T_{res}  < T_{LBT}(1-  \lambda^\ddagger )\right\}.
\end{align}

In addition to finding the stopping policy that maximizes the rate of return, we must guarantee the conditions established in Lemma~\ref{th:barrho} or \ref{lem:bound}, i.e., the ratio of used LIFS opportunities must not exceed $$
 \pi(\bar\rho) := \bar\rho \frac{ P^{(n)}_\idle}{1- P^{(n)}_\idle }.
$$
In order to accommodate such constraint, let us first introduce the following lemma.

\begin{lemma}
\label{lem:uniform}
$T_{res}$ is uniformly distributed between $0$ and $T_{LBT}$. 
\end{lemma}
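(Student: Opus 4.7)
My plan is to prove this by a stationarity/asynchrony argument: the frame boundary grid is deterministic with period $T_{LBT}$, while the LIFS opportunities are generated by the WiFi access process, whose timing depends on random backoff counters and accumulated busy durations that are not synchronized with the licensed LTE frame clock. I would therefore argue that the \emph{phase} of a LIFS opportunity modulo $T_{LBT}$ is uniformly distributed on $[0,T_{LBT})$, which immediately gives the claim since $T_{res}$ is exactly this phase (measured as the distance to the next boundary).

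The first step is to fix the notation: let $\{b_j\}_{j\in\mathbb{Z}}$ denote the deterministic licensed-anchor frame boundaries with $b_{j+1}-b_j=T_{LBT}$, and let $L$ be the (random) time of a LIFS opportunity. Then $T_{res}=\min\{b_j-L : b_j\ge L\}$, which depends on $L$ only through $L \bmod T_{LBT}$. So the lemma reduces to showing that $L \bmod T_{LBT}$ is uniform on $[0,T_{LBT})$.

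Next, I would observe that a LIFS opportunity is triggered at the instant a WiFi busy period ends plus the constant LIFS $=20\mu$s offset, so $L = \sum_k D_k + \text{const}$, where the $D_k$ are the durations of preceding idle slots and busy transmissions. These durations are driven by the DCF backoff (uniform on $\{0,\dots,2^mCW_{\min}-1\}$ MAC slots) and by payload-dependent transmission lengths, with no coupling whatsoever to the LTE frame clock. The process generating $L$ is therefore independent of the phase origin of the frame grid.

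The cleanest way to formalize this is to condition on any realization of the WiFi process up to (but not including) the LIFS opportunity under consideration, and then treat the offset of the frame grid relative to an arbitrarily chosen reference time as uniform on $[0,T_{LBT})$ (equivalently, shift-invariance of the grid modulo $T_{LBT}$). Under this viewpoint $L \bmod T_{LBT}$ is uniform, and hence $T_{res} = T_{LBT} - (L\bmod T_{LBT})$ is uniform on $[0,T_{LBT}]$ as claimed. The main obstacle, more conceptual than technical, is justifying this ``uniform phase'' assumption rigorously; I would do so by appealing to the stationarity of the DCF process and to the fact that when the initial alignment of the two clocks is unknown, the worst-case (and natural steady-state) assumption is that the offset is drawn uniformly, which is standard in analyses of asynchronous sampling of a periodic signal.
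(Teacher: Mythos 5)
Your route is genuinely different from the paper's. The paper discretizes time into MAC slots, models each slot as an independent Bernoulli trial that is a LIFS opportunity with probability $\phi = 1 - P_{\text{idle}}$, and then conditions on the event that exactly one opportunity falls inside a frame interval $[(\delta-1)T_{LBT},\, \delta T_{LBT})$: by exchangeability of i.i.d.\ trials the location of that single success is uniform over the slots of the interval, and $T_{res}$ inherits uniformity. You instead keep time continuous and invoke a random-phase argument: the LIFS instant modulo $T_{LBT}$ is uniform because the WiFi process is not synchronized with the licensed frame clock. Your reduction of the lemma to the uniformity of $L \bmod T_{LBT}$ is clean, and the random-phase picture is arguably closer to the physical situation (the paper's proof quietly measures $T_{LBT}$ in slots of equal length, which is itself an idealization).

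The weak point is the step you yourself flag: uniformity of the phase is asserted, not derived. Stationarity of the DCF process alone does not give it --- a stationary point process sampled against a fixed periodic grid can have arbitrarily non-uniform phases (e.g.\ a point process that is itself periodic with period $T_{LBT}$). What makes the phase uniform in the paper's model is not stationarity but the memorylessness and exchangeability of the slot outcomes; what would make it uniform in yours is either an explicit assumption that the grid offset is drawn uniformly and independently of the WiFi process (an extra modeling postulate essentially equivalent to the conclusion), or a mixing/equidistribution argument showing that the phase of the $\omega$-th opportunity converges to uniform as $\omega$ grows. Either patch is acceptable at the level of rigor of the paper, but as written your proof assumes what it needs to prove, whereas the paper extracts uniformity from a concrete (if idealized) probabilistic model.
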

\begin{IEEEproof}
 We have a  time-slotted system $t=1, 2, \cdots$, each slot containing an idle, collision or successful WiFi event. 
 Let  $X:=\langle X_1, X_2, \cdots \rangle$ be a sequence of slots where $X_i$ is a Bernoulli trial with probability $\phi=1-P_{\text{idle}}$, i.e., a successful trial is a LIFS opportunity. This corresponds to a Bernoulli process $\mathcal{S}(t):=\sum_{i=1}^t X_i$ such the probability of having $K$ LIFS opportunities in $t$ slots   follows a binomial probability distribution, i.e., $P(\mathcal{S}(t)=K)=\mathcal{B}(t,K)$, and the number of slots between two LIFS opportunities follows a geometric distribution.  Let now $\mathcal{L}:=\langle \delta\cdot T_{LBT}\rangle$ be the sequence of slots containing an LTE-like frame boundary for $\delta=1,2,\cdots$ and denote 
 $T^{(\omega)}$ the (random) slot in which LIFS opportunity $\omega$ occurs. Then, since in order to be a LIFS opportunity, this must occur within the interval $[(\delta-1)T_{LBT}, \delta T_{LBT}),\ \delta=1,2,\cdots$, we need to compute the conditional distribution of $T^{(\omega)}$. As
 \begin{align}
 P\left ( T^{(\omega)} \le t \mid \mathcal{B}(T_{LBT}, 1)\right) &= \frac{P\left(T^{(\omega)} \le t ,\ \mathcal{B}(T_{LBT},1) \right)}{\mathcal{B}(T_{LBT},1)}\nonumber\\
 &=\frac{t \phi (1-\phi)^{t-1}(1-\phi)^{T_{LBT}-t}}{t \phi(1-\phi)^{T_{LBT}-1}}\nonumber\\
 &=\frac{t}{T_{LBT}}
 \end{align}
 is the CDF of an uniform distribution, $T_{res}=\mathcal{L}^{(\omega)}-T^{(\omega)}$ (note $\mathcal{L}$ is not a random process) is also uniformly distributed between $0$ and $T_{LBT}$.
\end{IEEEproof}

It is therefore sufficient to simply consider those LIFS opportunities closer than $ {\pi(\rho)}{T_{LBT}}$ slots to a frame boundary, that is,
\begin{align}
T_{res}  < {\pi(\bar\rho)}{T_{LBT}}.
\end{align}
This leads to the following theorem describing an optimal policy for synchronous LBT, which we name ``Orthogonal Licensed-Assisted Access'':
\begin{theorem}[\bf OLAA transmission policy]
 Consider a policy $\Pi^{\text{OLAA}}$ a synchronous LBT follows to initiate transmission according to the rule:
\begin{align}
\Pi^{\text{OLAA}}(\lambda, \rho) \!:=\! \left \{ \Pi \!\ge 1\! \mid\! T_{res}  \!\!<\! \min\!\Big( T_{LBT}(1\!\!-\!\!\lambda), {\pi( \rho)}{T_{LBT}}  \Big ) \!\right \}, \nonumber
\end{align}
where $T_{res}$ is the (random) time between a LIFS opportunity and the nearest frame boundary. Then, $\Pi^{\text{OLAA}}(\lambda^\ddagger, \bar\rho)$ is a throughput-optimal transmission policy $\Pi^\ddagger$ for synchronous LBT WNs, where $\lambda^\ddagger$ is computed by solving~\eqref{eq:fixed-point}, while $\pi(\bar \rho)$ and $\bar\rho$ are given by Lemma~\ref{lem:bound} and~\eqref{eq:optpi}, respectively.
\end{theorem}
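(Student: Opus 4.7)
The plan is to decouple the theorem into two pieces: (i) an unconstrained throughput-optimal stopping rule, and (ii) a feasibility envelope imposed by the fairness constraint. First, I would reuse the optimal-stopping derivation developed immediately above. Viewing the sequence of LIFS opportunities as a renewal process in which the useful reward at opportunity $\omega$ equals $Y^{(\omega)} = T_{LBT} - T_{res}^{(\omega)}$ and the elapsed channel time is the cost, the rate-of-return problem reduces to the fixed-point equation~\eqref{eq:fixed-point}. Invoking \cite[Ch.~6, Th.~1]{ferguson}, the optimal \emph{unconstrained} rule is the threshold rule $\{\omega \ge 1 \mid T_{res}^{(\omega)} < T_{LBT}(1-\lambda^\ddagger)\}$, with $\lambda^\ddagger$ the unique root of~\eqref{eq:fixed-point}.

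Second, I would rewrite the fairness constraint from Lemma~\ref{th:barrho} (or Lemma~\ref{lem:bound} in the heterogeneous case) as a threshold on $T_{res}$. The constraint requires that the long-run fraction of LIFS opportunities actually used for transmission not exceed $\pi(\bar\rho)$. By Lemma~\ref{lem:uniform}, $T_{res}$ is uniformly distributed on $[0, T_{LBT}]$, so the long-run fraction of opportunities with $T_{res} < c$ equals $c / T_{LBT}$. Hence the fairness constraint is equivalent to accepting only those opportunities that satisfy $T_{res}^{(\omega)} < \pi(\bar\rho)\, T_{LBT}$.

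To combine the two, I would intersect the rate-of-return-optimal acceptance set with the fairness-feasible acceptance set, yielding the threshold rule with cutoff $\min\bigl(T_{LBT}(1-\lambda^\ddagger),\, \pi(\bar\rho)\, T_{LBT}\bigr)$, which is exactly $\Pi^{\text{OLAA}}(\lambda^\ddagger, \bar\rho)$ as stated. The case analysis is the natural one: if $\pi(\bar\rho)\, T_{LBT} \ge T_{LBT}(1-\lambda^\ddagger)$, the fairness constraint is slack at the unconstrained optimum and the latter threshold governs; otherwise, the fairness constraint binds and the largest feasible threshold is used.

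The hard part will be arguing rigorously that, whenever the fairness constraint binds, the largest feasible threshold is still throughput-maximal among feasible threshold rules. Concretely, I would need to verify that the rate-of-return, as a function of the cutoff $c$, is nondecreasing on $[0, T_{LBT}(1-\lambda^\ddagger)]$ and nonincreasing afterward. This one-sided unimodality should follow from the characterization of $\lambda^\ddagger$ as the supremum of the rate-of-return over all stopping rules~\cite[Ch.~4, Th.~1]{ferguson}, combined with the standard fact that threshold rules are optimal within the class of stopping rules driven by i.i.d.\ rewards; the remaining step reduces to a first-order argument that exploits the uniform distribution of $T_{res}$ from Lemma~\ref{lem:uniform}. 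Once this unimodality is in place, the minimum-threshold rule is the unique feasible maximizer and the theorem follows.
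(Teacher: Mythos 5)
Your proposal follows essentially the same route as the paper: the unconstrained threshold $T_{LBT}(1-\lambda^\ddagger)$ comes from the rate-of-return optimal-stopping reduction to the fixed-point equation~\eqref{eq:fixed-point}, the fairness cap $\pi(\bar\rho)T_{LBT}$ comes from translating the admissible fraction of LIFS opportunities into a threshold on $T_{res}$ via its uniformity (Lemma~\ref{lem:uniform}), and the two are combined by taking the minimum. The ``hard part'' you flag---verifying that the rate of return is unimodal in the cutoff so that the capped threshold remains optimal among feasible rules---is a genuine rigor point that the paper leaves entirely implicit (its stated proof is only that the result ``follows directly from Lemma~\ref{lem:uniform}''), and your sketch of closing it via the uniform law of $T_{res}$ is sound, so your write-up is if anything more complete than the paper's.
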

\begin{proof}
 The proof follows directly from Lemma~\ref{lem:uniform}.
\end{proof}

\begin{figure*}[t!]
      \centering
      \subfigure[Asynchronous LBT, WiFi MPDU = 1500B, LBT Frame size = 1ms.]{
      \centering
      \includegraphics[trim = 0mm 0mm 0mm 0mm, width=0.45\linewidth ]{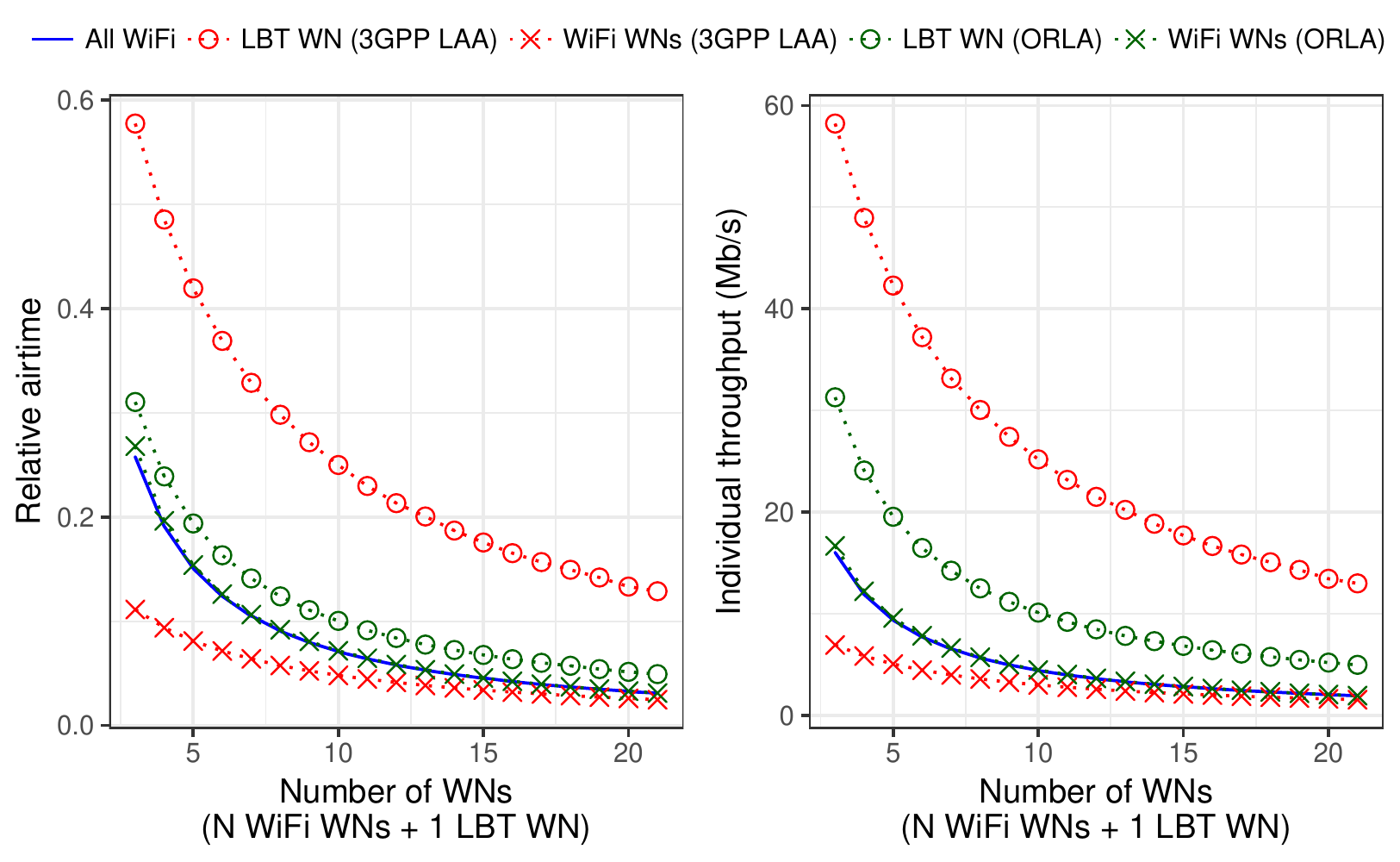}
      \label{fig:sims:sat-01:async}
      }
      \subfigure[Synchronous LBT, WiFi MPDU = 1500B, LBT Frame size = 1ms.]{
      \centering
      \includegraphics[trim = 0mm 0mm 0mm 0mm, clip, width=0.45\linewidth]{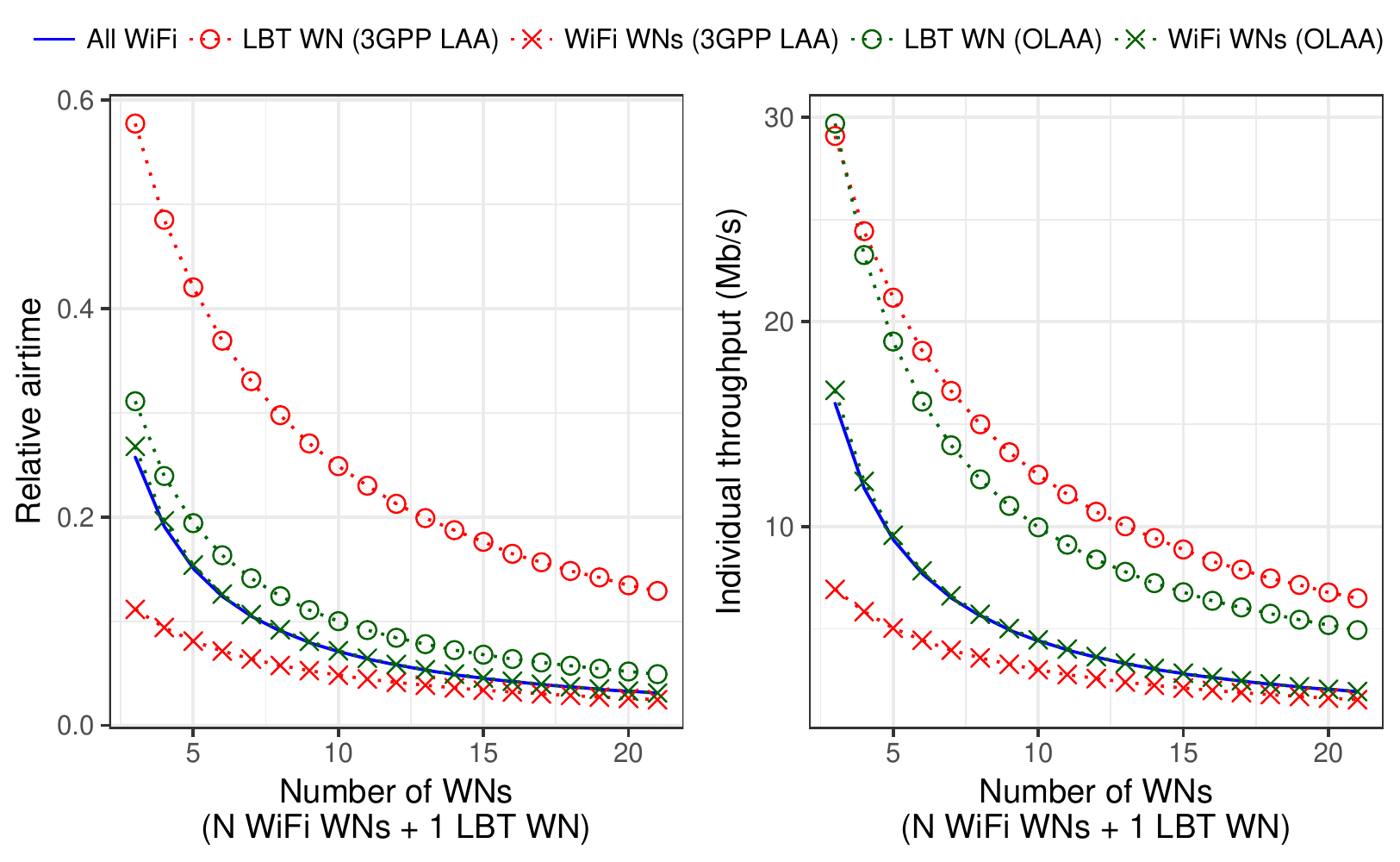}
      \label{fig:sims:sat-01:sync}
      }
      \subfigure[Asynchronous LBT, WiFi MPDU = 15000B, LBT Frame size = 1ms.]{
      \centering
      \includegraphics[trim = 0mm 0mm 0mm 0mm, width=0.45\linewidth ]{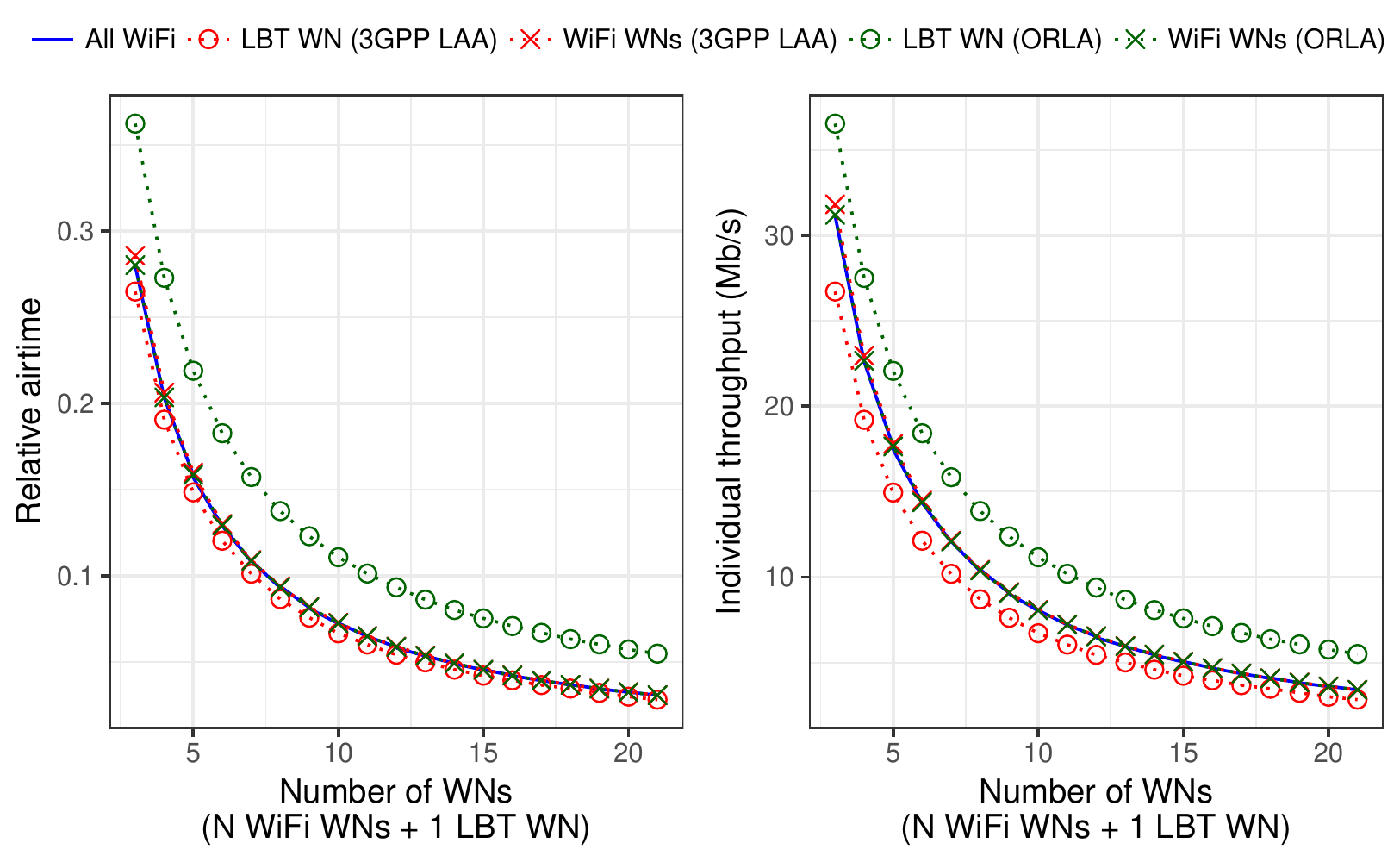}
      \label{fig:sims:sat-01:async:large-frame}
      }
      \subfigure[Synchronous LBT, WiFi MPDU = 15000B, LBT Frame size = 1ms.]{
      \centering
      \includegraphics[trim = 0mm 0mm 0mm 0mm, clip, width=0.45\linewidth]{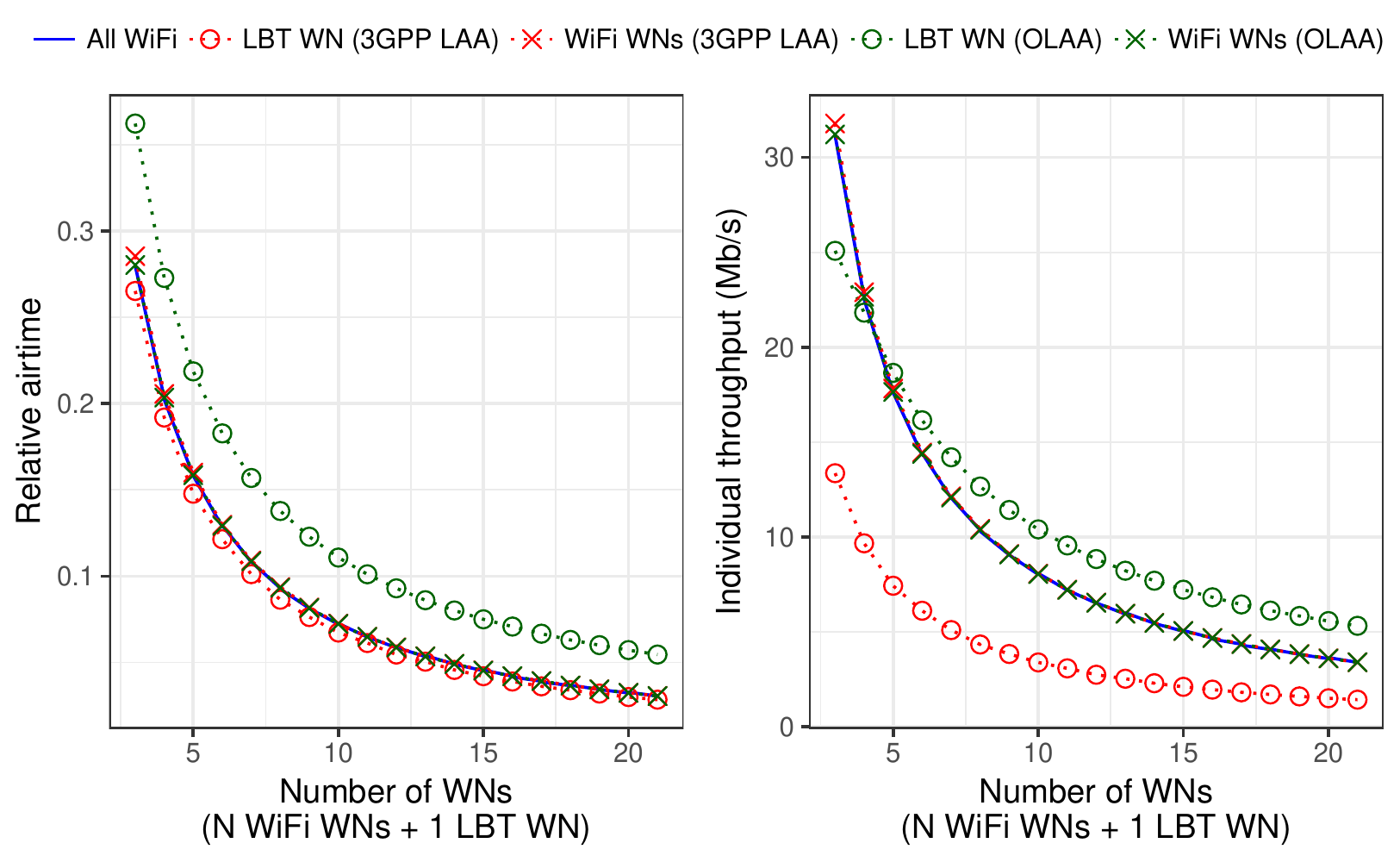}
      \label{fig:sims:sat-01:sync:large-frame}
      }
      \caption{Airtime and throughput performance of an LBT WN (triangles) operating with the proposed orthogonal coexistence mechanism with ORLA and OLAA, the benchmark 3GPP LAA, and the legacy WiFi stack, sharing the medium with a variable number of WiFi WNs. Performance of a background WiFi WN shown with crosses. Simulation results.}
      \label{fig:sims:sat-01}
      \vspace{-2mm}
\end{figure*}

\section{Simulation Results}
\label{sec:results}

In this section we undertake a comprehensive performance evaluation of the orthogonal LBT transmission policies we propose, i.e. ORLA and OLAA, by means of event-driven simulation. We will demonstrate our schemes attain superior throughput as compared to the de facto 3GPP LAA, while being substantially more fair to incumbent WiFi.
We consider coexistence with WiFi WNs that implement the IEEE 802.11ac specification \cite{IEEE80211ac}, with the parameters summarized in Table~\ref{tab:80211ac}. 
We examine the performance of both synchronous and asynchronous WNs. The latter employ the 3GPP's LAA protocol with the same contention parameters as WiFi WNs or our mechanism with ORLA policy. Unless otherwise stated, $CW_{\text{min}}=16$, $\bar{m}=4$, $T_{LBT} = 1$ms (i.e. LTE's Transmission Time Interval or TTI), and 64-QAM modulation is employed by both technologies. 

\begin{table}[b!]
\centering
\begin{tabular}{|c|c|}
\hline
Slot Duration ($\sigma$) & $9$~$\mu$s \\ \hline
DIFS & $34$~$\mu$s\\ \hline
SIFS & $16$~$\mu$s\\ \hline
PLCP Preamble and Headers Duration ($T_{\text{PLCP}}$) & 40~$\mu$s  \\ \hline
MPDU Delimiter Field ($L_{\text{del}}$) & 32 bits  \\ \hline
MAC Overhead ($L_{\text{mac-oh}}$) & 288 bits  \\ \hline
ACK Length ($L_{\text{ACK}}$) & 256 bits  \\ \hline
Data bit rate ($C$) & 130 Mb/s  \\ \hline
Control bit rate ($C_{\text{ctrl}}$) & 24 Mb/s  \\ \hline
\end{tabular}
\caption{IEEE 802.11ac~\cite{IEEE80211ac} parameters used for simulations.}\label{tab:80211ac}
\end{table}

\subsection{Variable Number of WiFi WNs}

\begin{figure*}[t!]
      \centering
      \subfigure[WiFi MPDU = 1500B, LBT Frame size = 1ms.]{
      \centering
      \includegraphics[trim = 0mm 0mm 0mm 0mm, width=0.45\linewidth ]{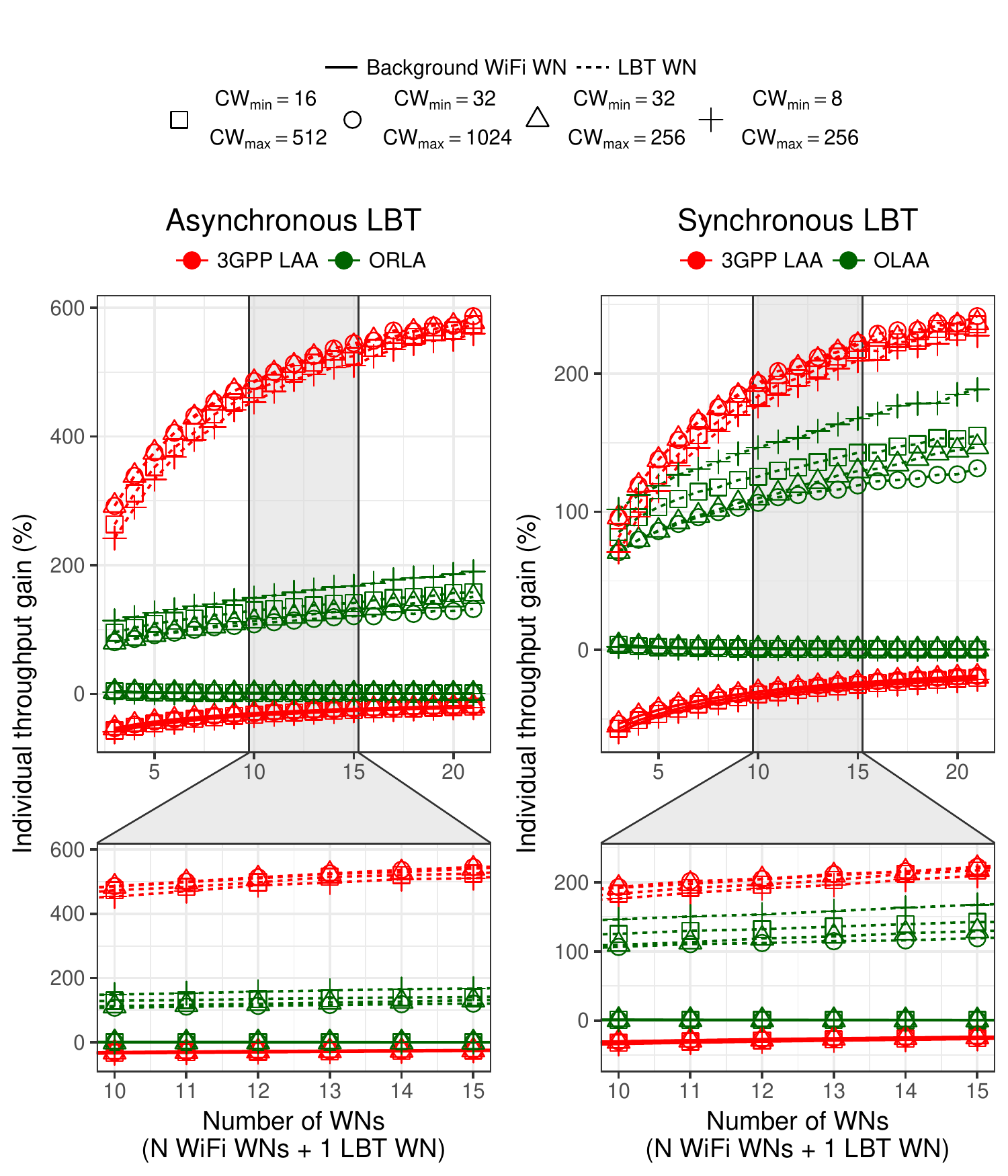}
      \label{fig:sims:sat-01:gain:small}
      }
      \subfigure[WiFi MPDU = 15000B, LBT Frame size = 1ms.]{
      \centering
      \includegraphics[trim = 0mm 0mm 0mm 0mm, clip, width=0.45\linewidth]{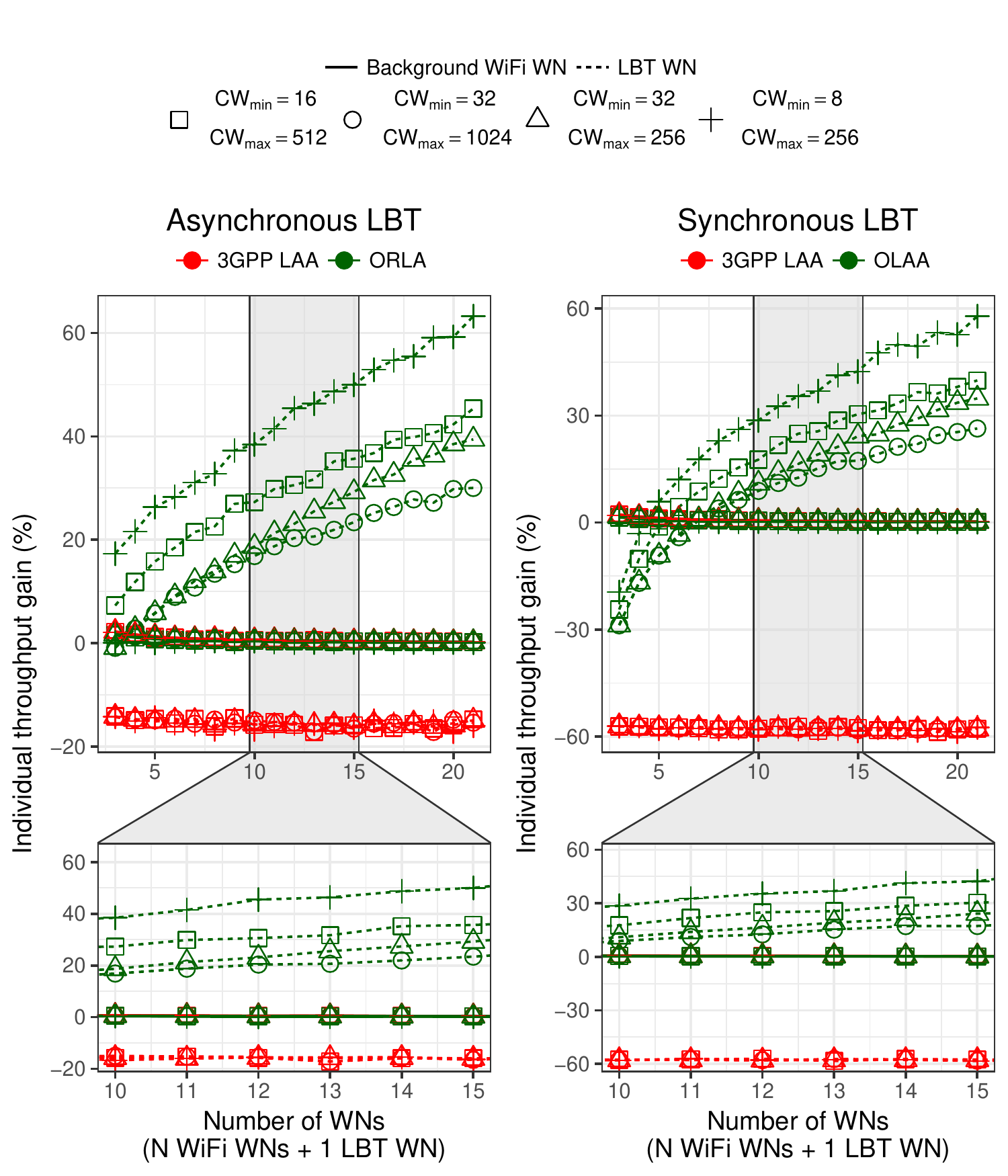}
      \label{fig:sims:sat-01:gain:large-frame}
      }
      \caption{Individual throughput gain different LBT approaches achieve with respect to legacy WiFi for a variable number of background WiFi WNs, different contention settings, and different MPDU sizes. LBT WNs employ (a)synchronous operation. Simulation results. }
      \label{fig:sims:sat-01:gain}
      \vspace{-2mm}
\end{figure*}

We first investigate the airtime and individual throughput performance of an LBT WN operating with our LBT access mechanism with ORLA and OLAA policies, the benchmark 3GPP LAA, and the legacy WiFi protocol, as we vary the number of (background) WiFi WNs sharing the channel. We consider backlogged background WiFi transmitters, first working with $f_{\text{agg}}=1$ and payload $B=1500$B, and subsequently aggregating 10 packets (i.e., $f_{\text{agg}}=10$) and sending $B=15000$B upon each attempt. In these experiments, the coexisting LTE WN works with $T_{LBT} = 1$ms. Our measurement results are presented in Fig.~\ref{fig:sims:sat-01}.

It is important to observe first the behavior of 3GPP LAA and background WiFi WNs when the latter transmits bursts of $B=1500$B (the red points Figs~\ref{fig:sims:sat-01:async}--\ref{fig:sims:sat-01:sync}). Clearly LAA consumes almost 6$\times$ more airtime when the number of WiFi contenders is small, which results in up to a 2-fold reduction in the throughput of a WiFi WN (blue line vs. red crosses). This violates the first coexistence criterion, i.e., not harming the performance of incumbent WiFi. In contrast with our approach, both ORLA and OLAA, safeguard WiFi throughput (overlapping green crosses and blue line), while the LBT WN consumes additional airtime more wisely, almost doubling MAC throughput efficiency (which is the second coexistence criterion) without negatively impacting on WiFi (green circles). As such, 3GPP LAA attains more throughput as compared to ORLA though at the mentioned price (Fig.~\ref{fig:sims:sat-01:async}), though in synchronous mode of operation where frame alignment is required, the throughput performance of OLAA and 3GPP LAA are comparable, but ORLA consumes half the airtime thereby giving more opportunities to WiFi and ensures harmless operation (Fig.~\ref{fig:sims:sat-01:sync}).

Interestingly, when WiFi WNs transmit large payloads, i.e. MPDU=15000B, 3GPP LAA does not harm WiFi performance, but neither does it attain superior throughput efficiency (Figs.~\ref{fig:sims:sat-01:async:large-frame}--\ref{fig:sims:sat-01:sync:large-frame}). In fact, although the relative airtime of the two technologies is comparable, 3GPP LAA exhibits inferior throughput, even more so when operating synchronously, in which case throughput performance of LBT can be even less than half of that of WiFi (Fig.~\ref{fig:sims:sat-01:sync:large-frame}). Unlike the 3GPP benchmark, ORLA and OLAA do consume more airtime, though without affecting WiFi performance (again green crosses overlapping with blue line). This leads to a constant throughput gain, irrespective of the number of contenders. Importantly, under synchronous operation OLAA achieves twice the throughput of the 3GPP benchmark. These results suggest that \textbf{IEEE 802.11ac may prove more efficient than 3GPP LAA, whereas our proposed coexistence schemes bring up to 100\% throughput gain without harming WiFi.}

\subsection{Effects of Contention Parameters}
\begin{figure*}[t!]
      \centering
      \subfigure[Asynchronous LBT.]{
      \centering
      \includegraphics[trim = 0mm 0mm 0mm 0mm, width=0.44\linewidth ]{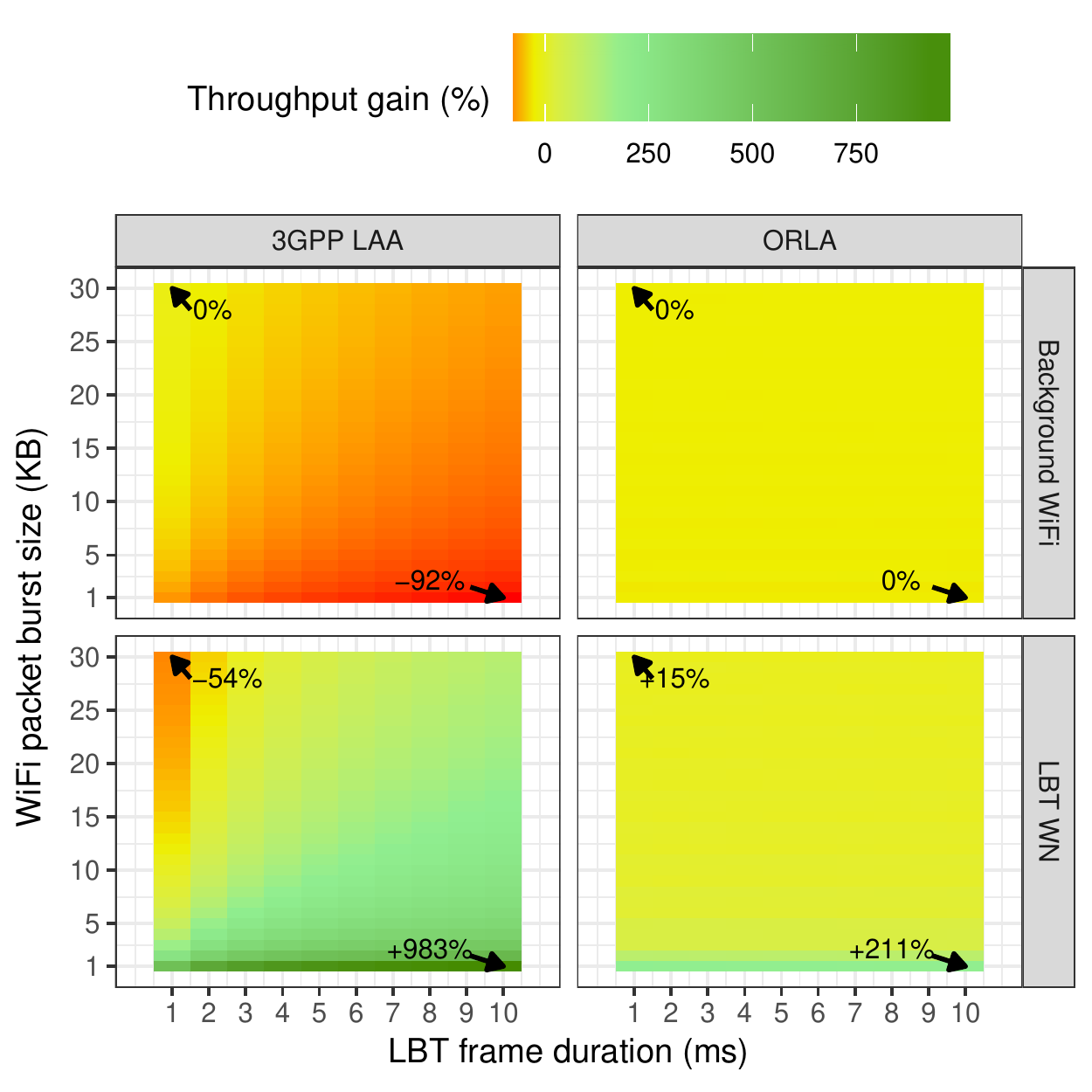}
      \label{fig:sims:sat-01:gain:all:async}
      }
      \subfigure[Synchronous LBT.]{
      \centering
      \includegraphics[trim = 0mm 0mm 0mm 0mm, clip, width=0.44\linewidth]{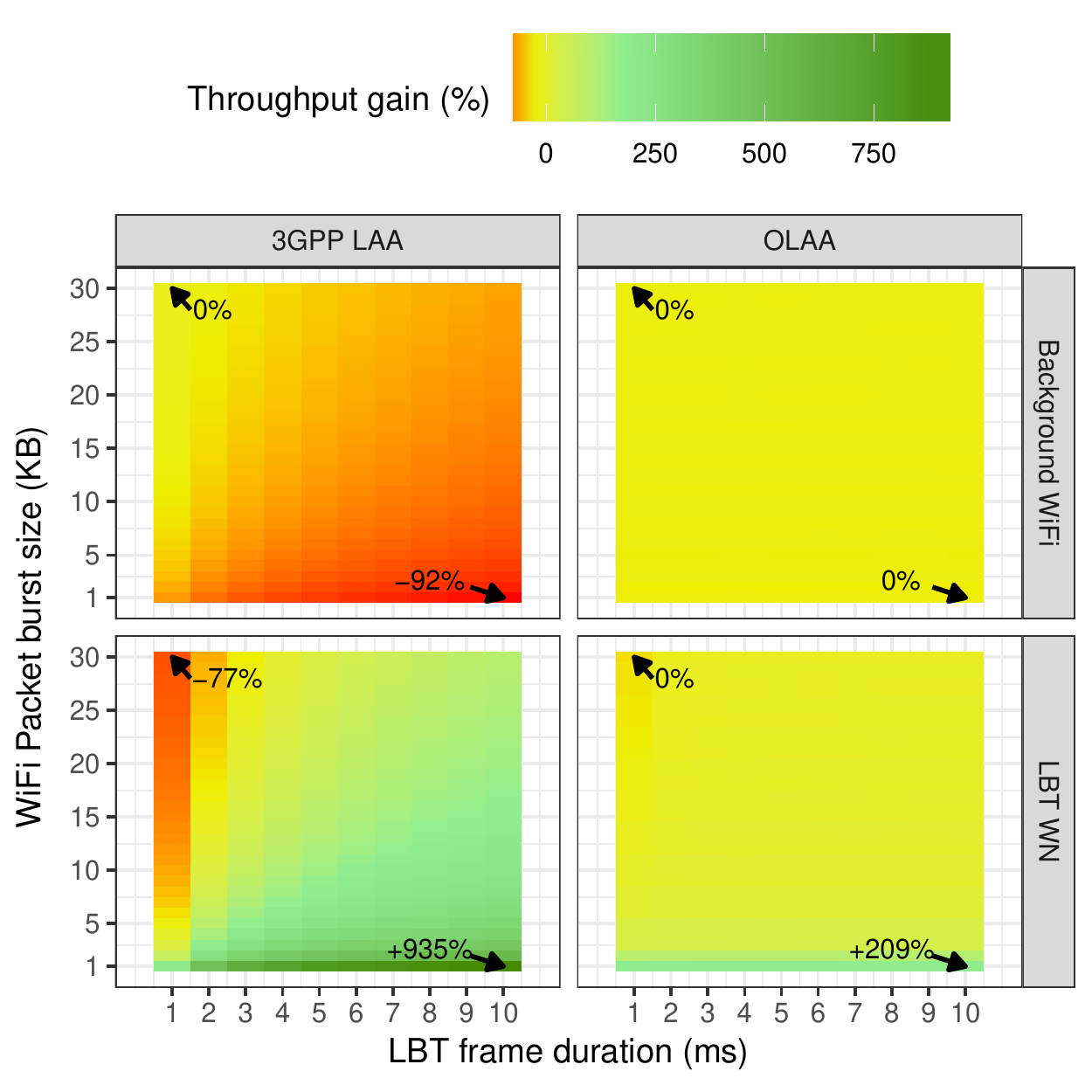}
      \label{fig:sims:sat-01:gain:all:sync}
      }
      \caption{Individual throughput gain with respect to legacy WiFi when a LBT WN shares the channel with 5 background WiFi WNs. Different LBT frame and WiFi burst sizes employed. Simulation results.}
      \label{fig:sims:sat-01:gain:all}
      \vspace{-4mm}
\end{figure*}

Although the optimal configuration of WiFi WNs is outside the scope of our work, since the 802.11 standards allows for adapting the contention settings ($CW_{\min}$ and $CW_{\max}$), we are interested in quantifying the performance gains LBT may obtain under different configurations, and irrespectively any losses incurred onto WiFi. To this end, we consider both synchronous and asynchronous LBT operation, fixed LBT frame size (1ms), different WiFi burst sizes (1500 and 15000B), and 4 CW configurations with different aggressiveness, namely (16,512), (32,1024), (32,256), and (8,256), while we vary again the number of WiFi WNs. The results of these experiments are illustrated in Fig.~\ref{fig:sims:sat-01:gain}, where we plot the individual throughput gains with respect to a scenario where the LBT WN were an additional WiFi WN.

We note that when the background WiFi stations transmit small payloads and the LBT WN operates with the 3GPP LAA scheme, the difference between contention configurations are subtle. However, the remarkable 3GPP LAA throughput gains are at the expense of WiFi losses (negative gain), as shown in Fig.~\ref{fig:sims:sat-01:gain:small}. In contrast, our orthogonal coexistence approach imposes no penalty on WiFi (observe the constant lines at gain equal to 0\%), while we achieve throughput gains almost up to 200\% with both ORLA and OLAA policies. These gains are slightly more prominent when WiFi contends aggressively (i.e. $CW_{\min}=8$), which we attribute to more LIFS opportunities higher WiFi attempt rates create. The small differences between contention window parameters become clearer in the bottom of the figure, where we zoom in between 10 and 15 WiFi WNs. 

Our previous finding about the questionable efficiency of 3GPP LAA in scenarios where WiFi WNs transmit 15000B payloads is further confirmed in Fig.~\ref{fig:sims:sat-01:gain:large-frame}, where we observe that LAA incurs no penalty to WiFi WNs (which virtually lie on a gain equal to 0\% irrespective of the number of WNs) but 3GPP LAA has between 15 and 60\% of throughput loss over the case where it followed the WiFi protocol of the background WNs (this become clearer at the zoomed area, at the bottom part of the figure). In contrast, ORLA and OLAA do not affect WiFi performance in neither asynchronous or synchronous settings. We do note that when the number of WiFi WNs is small (mostly less than 5), OLAA does not attain throughput gains because the LBT frame size is fixed and no aggregation is allowed (we will assess optimal ORLA/OLAA aggregation, as suggested in Eq.\eqref{eq:orla-aggregation}, later on). However, depending on contention setting, \textbf{individual LBT performance grows to as much as 60\% under high contention levels, with both orthogonal coexistence policies we propose.}

\subsection{Impact of LBT Frame and WiFi Burst Durations}

Next, we provide further insight into the impact of different WiFi burst sizes, as well as $T_\text{LBT}$ settings, i.e., the duration for which an LBT holds the channel when transmitting. We are particularly interested in understanding the impact LBT may have on WiFi when working with large frame sizes, since by default LTE operates with 10ms frames, while the ETSI specification allows transmission in the unlicensed band for up to 8ms. For this purpose, we consider a scenario with 5 backlogged WiFi WNs and the LBT WN operating with 3GPP LAA scheme and with the orthogonal coexistence mechanism we propose, respectively. We investigate again both asynchronous and synchronous operation, whereby our solution employs ORLA and OLAA policies, respectively. The results of this experiment are shown in Fig.~\ref{fig:sims:sat-01:gain:all}, where we plot as heatmaps the throughput gains of both LBT (with each approach) and background WiFi.

Under asynchronous operation, it can be seen that the 3GPP LAA may attain as much as 983\% throughput gains when working with 10ms frames. However, this cuts off WiFi transmissions almost completely (92\%) throughput loss. Conversely, if WiFi performance is to be preserved, 3GPP LAA has minimal gains as compared to using legacy WiFi, and may even be 54\% less efficient if the background WiFi WNs employ long bursts (Fig.~\ref{fig:sims:sat-01:gain:all:async}). This effect is further exacerbated in the case of synchronous LBTs, as illustrated in Fig.~\ref{fig:sims:sat-01:gain:all:sync}.

In contrast \textbf{both ORLA and OLAA achieve more than 200\% throughput gains when operating with long LBT frames, without negatively impacting on WiFi}. Indeed the performance of the incumbent remains unaffected, irrespective of the LBT frame/WiFi burst settings. Further, ORLA achieves improvements even when sending 1ms frames.

\subsection{Non-saturation Conditions}

We now turn attention to circumstances where incumbent WiFi WNs have limited offered load (i.e., non-saturation). We expect LBT to effectively exploit the additional airtime available in light load regimes, without harming WiFi performance. To this end, we consider 5 WiFi WNs transmitting 1500B PDUs and increase their offer load, relative to the load that saturates the network, from 10\% (light load) to 100\% (saturation conditions).
We study the performance of both LBT and background WiFi when the LBT frame duration is 1 and 10ms, respectively; the LBT WN operates again with either 3GPP LAA or the proposed orthogonal coexistence scheme with ORLA (asynchronous) or OLAA (synchronous) policy, respectively. Note however, that aggregation level in this case is automatically adjusted for both ORLA and OLAA, as explained in Eq.~\eqref{eq:orla-aggregation}. 

\begin{figure}[t!]
      \centering
      \includegraphics[trim = 0mm 0mm 0mm 0mm, clip, width=\linewidth]{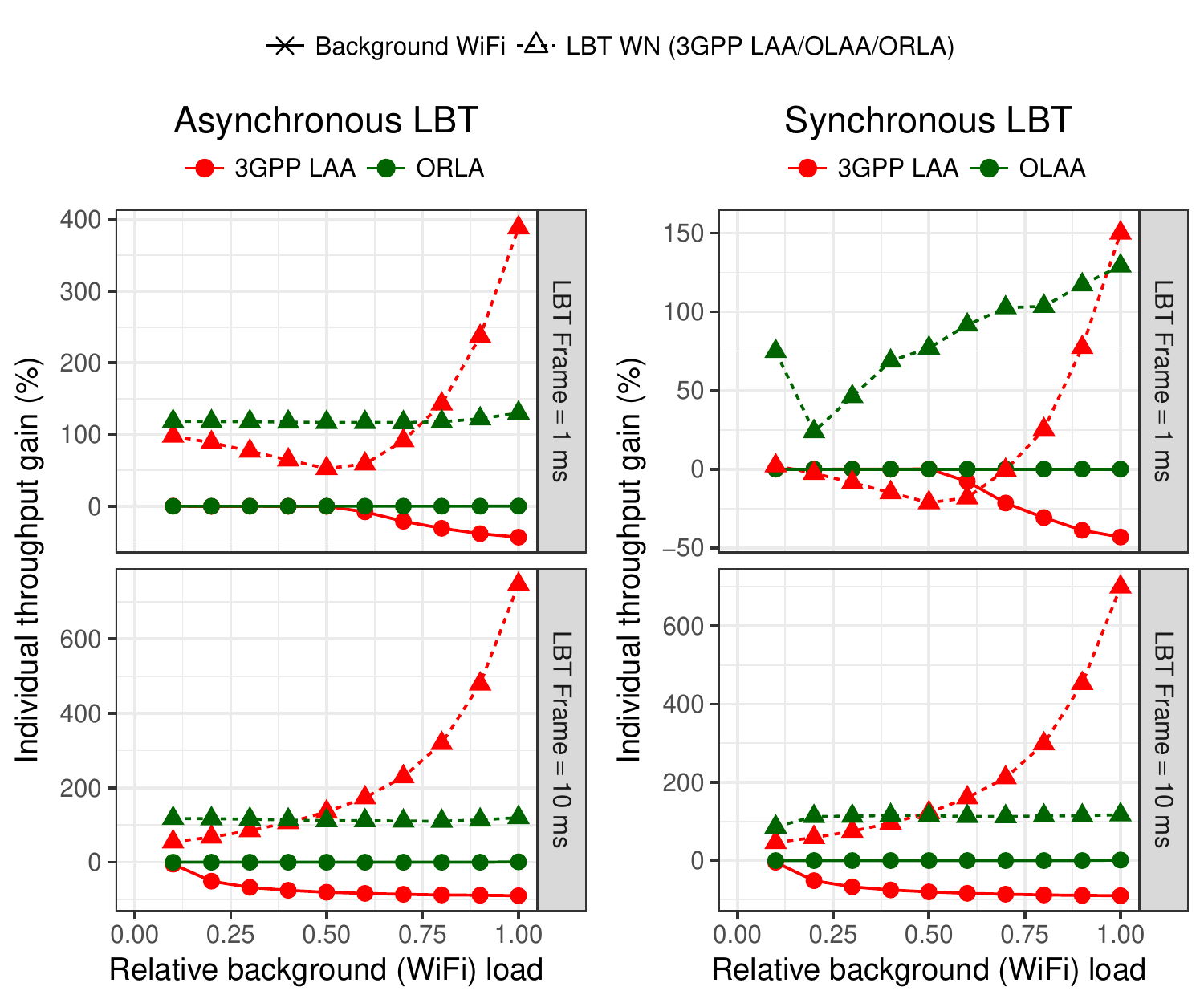}
      \caption{Individual throughput gain with respect to legacy WiFi when an LBT WN shares the medium with 5 background WiFi WNs, whose offered load increases. Simulation results. }
      \label{fig:sims:nonsat-01}
      \vspace{-4mm}
\end{figure}

The obtained results are plotted in Fig.~\ref{fig:sims:nonsat-01}. It can be seen that 3GPP LAA negatively impacts on WiFi even when lightly
loaded. This is more obvious when the frame duration is long (10ms), while we note that when the $T_\text{LBT}=1$ms, 3GPP LAA leaves WiFi unaffected but exhibits decreasing performance up to the point where the relative load is precisely 50\% (observe the LBT minimum), following which the LBT gain grows at the expense of WiFi. \textbf{The proposed transmission policies do not affect non-saturated WiFi WNs. ORLA provides steady throughput gains above 100\% up to the point where the WLAN saturates, OLAA's performance grows with WiFi activity level, again exceeding 100\% improvements.}

\subsection{Heterogeneous Bitrates}

We conclude the evaluation by investigating the performance of the proposed scheme and policies in a multi-rate scenario where 5 backlogged WiFi WNs transmit at different bitrates in response to dissimilar channel conditions. Specifically, upon accessing the channel, each transmits 1500B at the following rates $\{156, 130, 78, 39, 13\}$ Mb/s, respectively. This will yield longer slot durations whenever a slower station transmits, leaving less time available for both LBT and other WiFi contenders. In this scenario, the LBT WN operates with a frame of 1ms and transmits using 64-QAM (MCS level 6). We compare again against the 3GPP LAA benchmark in both asynchronous and synchronous settings. 

As seen in Fig.~\ref{fig:sims:multirate-01}, with asynchronous operation the 3GPP LAA attains remarkable gains as compared to legacy WiFi (500\%), but has a negative impact on the WiFi contenders. In contrast, ORLA ensures harmless coexistence, while still providing nearly 200\% throughput gains. When the LBT transmissions align to frame boundaries (synchronous LBT), the 3GPP LAA's gains are less impressive, while WiFi is more severely affected. The results obtained confirm the OLAA policy does not inflict penalties onto WiFi also in this case, while still achieving 150\% performance gains.

\begin{figure}[t!]
      \centering
      \includegraphics[trim = 0mm 0mm 0mm 0mm, clip, width=\linewidth]{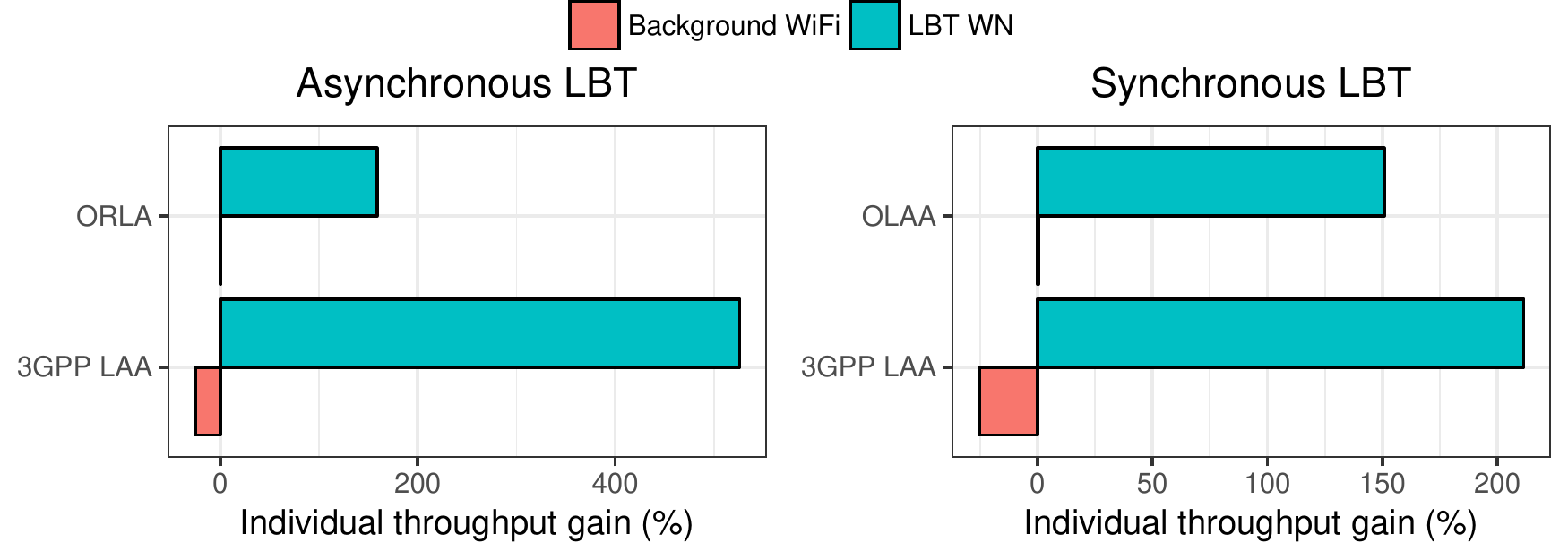}	
      \caption{Individual throughput gain with respect to legacy WiFi in a multi-rate scenarios where an LBT WN share the channel with 5 saturated WiFi WNs. WiFi's MCS $\in \{156, 130, 78, 39, 13\}$ Mb/s, LBT MCS level = 6. WiFi MPDU = 1500B, LBT Frame size = 1ms. Simulation results.}
      \vspace{-2mm}
      \label{fig:sims:multirate-01}
\end{figure}

\section{Practical Considerations}\label{sec:practical}


In this section we discuss the implications and validity of some of the assumptions made in our theoretical analysis, and give a set of technical guidelines for practical implementation of the ORLA and OLAA coexistence mechanisms. Recall that our objective is to provide superior MAC efficiency as compared to 3GPP's LAA, while protecting native WiFi systems. 

\subsection{PHY layer aspects}
Our focus is on the MAC layer and in our modeling we have made number of assumptions regarding the physical layer, primarily for tractability. We revisit these and explain how they could be captured more precisely in practice. In particular, 
\begin{enumerate}[label=($\arabic*$)]
 \item \emph{Channel errors:} We assumed channel errors are completely handled by appropriate use of Modulation and Coding Schemes and as such packet loss occurs only due to collisions. Our analysis can however be readily extended to account for noise/fading induced channel losses, which directly impact the average slot duration. In \cite{Patras:2015:ADHOCNETS} we showed that the average slot duration can be inferred by only measuring the duration of successfully received frames, which an LBT WN can easily perform via packet sniffing. Hence, accommodating channel errors makes use of available measurements and does not require explicitly quantifying the loss rate.
 \item \emph{Hidden terminals:} We note that hidden WNs could initiate a transmission even when another WN (WiFi or LBT) has already acquired the channel, therefore the attempt probabilities are coupled with the dynamics of the transmissions not directly overheard. While the goal of our work is not to address such WiFi-related issues, which are already well studied in the literature (see e.g. \cite{giustiniano07}), the coexistence mechanism we propose aims to avoid harming incumbent WiFi. For this purpose, a non-WiFi LBT WN can minimize the amount of \emph{additional} hidden terminal issues by transmitting (at a sufficiently high power) a CTS-to-self every time a LIFS opportunity is taken. 
 \item \emph{Capture effect:} Our modeling assumes that concurrent transmissions cannot be decoded at the intended receivers, i.e., they result in collisions. In practice, however, a receiver could still decode a `strong' packet, if the receive power difference between multiple transmissions is sufficiently large. This may cause unfairness among WiFi WNs~\cite{patras2014mitigating}, which our design does not aim to mitigate. However, the proposed coexistence mechanism neither alters the behavior of incumbent WiFi, nor induces additional unfairness due to capture effect issues. Precisely, WiFi remains agnostic about the orthogonal access procedure we introduce. 
\end{enumerate}

\subsection{MAC layer aspects}
Regarding the MAC behavior modeling underpinning the design of our LBT scheme, we did not explicitly consider multi-operator scenarios, absence of WiFi activity, details concerning the computation of the average duration of slots containing collisions in multi-rate settings, and the impact on delay performance. We discuss these aspects next.
\begin{enumerate}[label=($\arabic*$)]
  \item \emph{Multiple non-WiFi LBT WNs:} Assuming a channel is mostly used by a single non-WiFi LBT WN, as in our model, is reasonable, since we expect that LTE operators have financial incentives to coordinate offline on exploiting spatial and spectral diversity. On the other hand, in case multiple non-WiFi LBT WNs are forced to contend for unlicensed spectrum, they can either coordinate transmissions via cross-carrier downlink control information (DCI) \cite{ahmadi:2014} or could distributively contend for non-WiFi LBT orthogonal channel time via CSMA methods.
  \item \emph{Absence of WiFi transmissions:} Our LBT design is built upon a seamless symbiosis between WiFi and non-WiFi LBT transmissions. Lack of WiFi activity in a network will effectively lead to no transmission opportunities for a LBT transmitter following our scheme. However, as we  employ carrier sensing, such circumstances can be detected and we argue in favor of dynamic switching to 3GPP's LAA in this corner case, especially since channel access would not be subject to collisions.
   \item \emph{Estimating collision durations:} The orthogonal transmission policies we propose depend on accurate knowledge of the average slot duration, and therefore the duration of slots containing collisions. Estimating this was long thought to be difficult, but recent results demonstrate analytically and practically that average collision durations can be accurately computed by simply inspecting correctly received packets~\cite{Patras:2015:ADHOCNETS}.   
   \item \emph{Impact on delay.} An LBT WN may impact on the delay experienced by background WiFi WNs when accessing the medium in two ways: ($i$)~by causing collisions with WiFi transmissions, and ($ii$) by continuous airtime consumption. By design our coexistence mechanism and transmission policies, in contrast to LAA, incur minimal delay penalties on WiFi, since ($i$)~our orthogonal access paradigm does not cause collisions, and ($ii$)~the LBT frame size can configured as equal (or smaller) to the amount of time taken by a WiFi transmission, with no cost in performance (see Fig.~\ref{fig:sims:sat-01:gain:all}).
\end{enumerate}

\subsection{Practical implementation}
In the remainder we discuss how the coexistence mechanism and the optimal transmission policies we propose could be implemented with off-the-shelf wireless hardware. Recall our design relies on the assumption that LBT WNs are equipped with native LTE transceivers and additional IEEE 802.11 network cards. The latter are inexpensive, already provide access to a number of channel state parameters required by our policies (e.g. duration of idle/busy periods via open source firmware~\cite{openfwwf}) and have been used to develop custom features (e.g. precise control of TX engine timings~\cite{Facchi2016ComCom}). For our design, key functionality can be implemented as follows.

\begin{enumerate}[label=($\arabic*$)]
  \item \emph{Estimating the number of WiFi WNs:} By employing passive packet sniffing on the 802.11 interface, the LBT WN can employ a range of practical methods to estimate the number of active WiFi contenders, including Kalman filtering~\cite{1208922} and Bayesian estimation~\cite{1415933}.
  \item \emph{Estimating transmission probabilities ($\tau$):} All the parameters required by our optimal transmission policies can be derived from the transmission attempt probabilities of WiFi WNs, $\tau$, as we explain in \S\ref{sec:orla}. Under saturation conditions, $\tau$ can be computed by solving a non-linear system of equations which depend on contention parameters $\text{CW}_{\text{min}}$ and $m$, and conditional collision probabilities $p$ (see \S\ref{sec:setup}). The former can be directly extracted from overheard \emph{beacon} frames sent periodically APs. The latter can be estimated by inspecting the retry flag of  overheard WiFi packets~\cite{serrano2012control}. In case of non-saturation, an LBT WN also needs to estimate the probability $q$  that a new packet arrives in a uniform slot time $T_{\text{slot}}$. If each WN managed sufficiently large number of heterogeneous flows, by the Palm--Khintchine theorem~\cite{palm-khinchine}, the resulting arrival process will be Poisson and $q$ can be related to the average load as $\lambda = -\log(1-q)/T_{\text{slot}}$. Although this assumption may not always hold, it is commonly used for testing purposes, including for LTE/WiFi coexistence evaluation~\cite{qualcomm-lteu}.
  \item \emph{Detecting LIFS opportunities:} To avoid collisions, an LBT WN has to detect if a channel has been idle for LIFS duration. This can be achieved by inspecting channel state registers of 802.11 cards and signaling such event through the device's shared memory~\cite{Facchi2016ComCom}, which the LTE protocol stack can periodically inspect. In case of synchronous LBT operation, the 802.11 interface can also be instructed by writing into the shared memory, to immediately generate a CTS-to-self frame that reserves the channel for the desired duration. 
\end{enumerate}

\section{Related Work}\label{sec:related}

The coexistence between IEEE 802.11-based WiFi networks and other wireless technologies in unlicensed spectrum has received much interest from the research and industry communities. Coexistence mechanisms have been repeatedly proposed to handle joint operation of WiFi and Bluetooth~\cite{friedman2013power}, WiFi and Zigbee~\cite{5672592, zhang2011enabling, tytgat2012avoiding}, WiFi and WiMaX \cite{andreev2011ieee}, and more recently between WiFi and LTE systems operating in unlicensed bands~\cite{6655388,6554707, babaei2014impact, cavalcante2013performance,gomez2016srslte,6328367,sagari,syrjala2015coexistence,sadek2015extending,7218451,guan4cu,laa-multicarrier,7247522,canocoexistenceton,capretti2016,Chai:2016}. 

The work in \cite{6655388} is one of the first to addressing LTE/WiFi coexistence, proposing a simple scheme based on Almost Blank Subframes (ABS) to support coexistence in a 900~MHz (TV) band. 
The performance of WiFi in the presence of LTE was analyzed numerically~\cite{cavalcante2013performance} and by means of simulation~\cite{6554707, babaei2014impact}, though the LTE access model in these studies is not based on listen-before-talk and therefore it is not compliant with ETSI regulation.  Experimental results of CSAT (Carrier Sensing Adaptive Transmission) based access over software-defined platforms were reported in~\cite{gomez2016srslte}, the impact of U-LTE CSAT duty cycling patterns were quantified in~\cite{capretti2016}, and WiFi signals were embedded into LTE in~\cite{Chai:2016} to improve coexistence.

Ratasuk \emph{et al.} studied an LBT-enabled LTE access mechanism very similar to what later was specified as Licensed-Assisted Access (LAA)~\cite{6328367}.
3GPP presented early simulation results with LBT-based LAA, claiming that in some scenarios an LAA node can be configured such that the performance of WiFi users would not be affected more than if another WiFi station were added to the network~\cite{tr36.889}. 
However, as we demonstrated through our evaluation, whether legacy LAA meets this fairness criteria while improving spectral efficiency (as our work does) remains questionable.

In \cite{sagari}, Sagari introduced an inter-network coordination architecture to solve the bandwidth and channel selection problem when LTE systems access WiFi bands.  
Coexistence was modeled as a classic opportunistic spectrum access problem in~\cite{syrjala2015coexistence}, whereby LTE and WiFi users are considered secondary primary users respectively, operating with cognitive full-duplex transceivers that follow a non ETSI-compliant access protocol. 

Sadek \emph{et al.} focused on a similar fairness criterion as the one we consider, i.e. in a network with a set of WiFi stations, if an arbitrary number of WiFi WNs are replaced with U-LTE nodes, the performance of the remaining WiFi stations should be comparable (or better) than in the all-WiFi scenario~\cite{sadek2015extending}. The authors undertook a simple analysis and simulation-based evaluation of a duty cycle-based approach (CSAT), which is not ETSI-compliant.
More recently, Guan \emph{et al.} analyzed the problem of channel selection and carrier aggregation when using U-LTE CSAT-like opportunistic spectrum access~\cite{guan4cu}, while Liu \emph{et al} proposed an LBT-based mechanism for LAA  multi-carrier operation~\cite{laa-multicarrier}.

In \cite{7218451}, Yun  \emph{et al.} made a radically different proposal where both LTE and WiFi signals are transmitted together and a novel decoding method is designed to decode the overlapped transmissions. However, this scheme requires a redesign of the physical layer for both LTE and WiFi, which renders practical implementation cumbersome. 

The work of Cano \emph{et al.} in \cite{7247522} and \cite{canocoexistenceton} proposed coexistence mechanisms based on both CSAT duty-cycling and LBT, which achieve a proportionally fair rate allocation across LTE/WiFi nodes. We note that our fairness criteria, in line with 3GPP's, fundamentally differs from this approach, since WiFi systems do not naturally lean towards proportional fairness.

The interest in U-LTE shown by industry is unquestionable. While Nokia, Qualcomm, and Huawei released white papers showing promising results~\cite{nokia, qualcomm,huawei}, the details of the access schemes and simulation models used are not public. 
Intel presented a performance analysis of the coexistence between both technologies from a system level perspective, using a large scale model based on stochastic geometry where both U-LTE and WiFi employ CSMA~\cite{bhorkar2014performance}. However, this work does not address the fundamental issue of designing an LTE-based access scheme that is fair to legacy WiFi. In fact, results show that WiFi performance is substantially degraded in the presence of the more spectrum-efficient U-LTE scheme.

To the best of our knowledge, the work we present in this paper and our preliminary results in \cite{valls-commlet} are the first attempt to explore orthogonal channel access in WiFi bands. Based on this paradigm, we devise optimal transmission policies that maximize spectral efficiency of LBT-based LAA, while remaining harmless to native WiFi. Here, we substantially extend our initial work in \cite{valls-commlet}, as we specify optimal transmission policies in more general cases, including scenarios where transmitters use different modulations and/or loads. We further design an optimal transmission policy for synchronous LAA, which minimizes channel overhead when synchronizing to licensed LTE framing.

\section{Conclusions}\label{sec:conclusions}
To enable LTE deployment in unlicensed bands and seamless integration with existing cellular systems, 3GPP has recently specified an LBT-based solution named Licensed-Assisted Access (LAA). Despite its potential to attain \mbox{superior} user multiplexing and robustness (e.g., via Hybrid ARQ error recovery), in this paper we contended that the 3GPP LAA improves 802.11 MAC efficiency at the cost of penalizing incumbent WiFi networks, with 3GPP LAA configurations that are completely fair to WiFi achieiving inferior MAC performance compared to 802.11. We present a radically different approach to coexistence in unlicensed bands, which overcomes the limitations of 3GPP LAA and is compliant with the listen-before-talk requirement of, e.g., ETSI's EN 301 893 regulation. The proposed coexistence mechanism builds a symbiotic relationship between incumbent WiFi and U-LTE that creates orthogonal airtime blocks for each system, thethereby avoiding collisions between them and substantially increasing the MAC layer efficiency of both technologies. Based on this orthogonal access procedure, we derive	 optimal transmission policies, namely ORLA and OLAA, for asynchronous and synchronous systems respectively, which maximize U-LTE throughput, yet cause no harm to background WiFi networks. Finally, by means of extensive system-level simulations, we demonstrated the proposed transmission policies attain LBT throughput gains above 200\% with no negative impact on WiFi. 

\bibliographystyle{IEEEtran}


\end{document}